\newtheorem{mythm}{Theorem}[section]
\newtheorem{mylem}[mythm]{Lemma}
\theoremstyle{definition}
\newtheorem{mydef}[mythm]{Definition}
\newtheorem{myrem}[mythm]{Remark}
\newtheorem{myexample}[mythm]{Example}
\numberwithin{equation}{section}
\newmdenv[linecolor = black, frametitle = \colorbox{white}{changeMe}, frametitleaboveskip = -.7em, innertopmargin = -1em, shadow = true, shadowsize = 3pt, skipbelow = 15pt]{equationFrame}
\newmdenv[linecolor = black, frametitle = \colorbox{white}{changeMe}, frametitleaboveskip = -.7em, innertopmargin = -.1em, shadow = true, shadowsize = 3pt, skipbelow = 15pt]{textFrame}
\begin{document}

\title[Killing vector fields and constructing their metric]{Fundamental cosmic anisotropy and its ramifications I: Killing vector fields and constructing their metric}

\author[R.W. Scholtens]{Robbert W.\ Scholtens}
\author[M. Seri]{Marcello Seri}
\author[H. Waalkens]{Holger Waalkens}
\author[R. van de Weygaert]{Rien van de Weygaert}

\address[Robbert W.\ Scholtens, Rien van de Weygaert]{Kapteyn Astronomical Institute \\ University of Groningen, Landleven 12 9747AD Groningen, the Netherlands}
\address[Robbert W.\ Scholtens, Marcello Seri, Holger Waalkens]{Bernoulli Institute for Mathematics, Computer Science, and Artificial Intelligence \\ University of Groningen, Nijenborgh 9 9747AG Groningen, the Netherlands}

\email[Robbert W.\ Scholtens, corresponding author]{r.w.scholtens@rug.nl}

\begin{abstract}
    On the largest scales, the universe appears to be almost homogeneous and isotropic, adhering to the cosmological principle. In contrast, on smaller scales inhomogeneities and anisotropy become increasingly prominent, reflecting the origin, emergence, and formation of structure in the universe. Moreover, a range of tensions between various cosmological observations may suggest it necessary to explore the consequences of departure from the ideal, uniform universe on the fundamental level. Thus, in this work, the foundation of spatially homogeneous yet anisotropic universes is studied. Specifically, when given a 3D Lie algebra of \emph{desired} Killing vector fields (as would be the case for a homogeneous yet anisotropic universe), we provide an explicit construction for the metric that has exactly those as its Killing vector fields. This construction is presented accessibly, in a directly-usable, algorithmic fashion. Some examples demonstrating the construction are worked out, including a constructive method to separate out (cosmic) time dependence in spatially homogeneous, anisotropic cosmologies.
\end{abstract}

\maketitle

\section{Introduction}

The current cosmic worldview is underpinned by the \emph{cosmological principle} (CP), stating that at the largest scales the universe is spatially homogeneous and isotropic. \cite{Peebles:1971, ellis_relativistic_2012}. These two properties impose a strong constraint on the large-scale geometry, leading to the three families of Robertson-Walker (RW) spacetime metrics, and ensuing Friedmann-Lemaître-Robertson-Walker (FLRW) cosmological models. The single permitted freedom of the scale factor $a(t)$ allows linking made observations to cosmic history, e.g\ through the redshift. Though validity of the cosmological principle has been questioned since its postulation (see \cite[\S2]{Peebles:1993} for a discussion), recent observations (to be elaborated in the next subsection) seem to challenge the CP's requirement of isotropy. It is therefore necessary to understand, especially from a theoretical perspective, what structure remains in a homogeneous yet anisotropic universe.

A spacetime which is spatially homogeneous and completely anisotropic is called a \emph{Bianchi model} \cite[\S5.2]{ellis_cosmological_2008}. They have been studied since their introduction into astrophysics in the 1960's \cite{ellis_class_1969, ellis_bianchi_2006}. A result of this endeavor has been to tabulate, for prototypical Lie algebras of vector fields, frames for spacetime metrics on which these vector fields are in fact Killing (see e.g.\ \cite[\S6.4]{ryan_homogeneous_1975}). As homogeneity (and isotropy) are encoded in a metric's Lie algebra of Killing vector fields (KLA), this constitutes a relation between KLAs and metric frames.

The standard methodology suffers two drawbacks, however: (i) the method of derivation is reliant on the abstract theories of Lie groups and algebras, which is not friendly to the uninitiated; and (ii) the tabulations only feature specific realizations of the (desired) KLAs, whereas for applications it may be desirable to have a more general ``recipe.''

% % Old:
% Our work remedies these drawbacks. Namely, we present in \S\ref{sec:derivation} without reliance on abstract mathematical results a methodology through which the same can be achieved as previous literature has done: to find, from a Lie algebra of \emph{desired} Killing vector fields, a frame for a metric on which it is indeed realized as Killing. Moreover, we present an accessible and step-by-step version of our method in \S\ref{sec:table}. In contrast with the tabulations of previous literature, our method may take as input \emph{any} suitable Lie algebra of vector fields.
% % My suggestion:
Our work remedies these drawbacks. Namely, we present in Section \S\ref{sec:derivation} a methodology to find, from a 3D Lie algebra of \emph{desired} Killing vector fields, a frame for a metric on which it is indeed realized as Killing. Differently from previous literature, we do this without reliance on abstract mathematical theory, enabling cosmologists without the necessary background to benefit from it. Moreover, we present an accessible and step-by-step version of our method in Section \S\ref{sec:table}. In contrast with the tabulations of previous literature, our method may take as input \emph{any} suitable Lie algebra of vector fields.

\subsection*{Challenges to the Cosmological Principle}
While an overwhelming body of observational evidence has shown that the cosmological principle represents an amazingly accurate description of the global universe, at least asymptotically, we still need to establish how close the universe fulfils these conditions and what the scale of homogeneity actually is. First and foremost evidence is that of the amazing isotropy of the Cosmic Microwave Background, at a level of $\approx 10^{-5}$ \cite{cobe1992,wmap2003,planck2020}. Strong indications also exist for the universe to be close to homogeneous on large scales. Most notably this is indicated by the scaling of the angular clustering of galaxies as a function of depth of the galaxy survey: the
angular two-point correlation function of galaxies is seen to scale almost perfectly with the survey depth as expected for a spatial
galaxy distribution that assumes uniformity on large scales \cite{groth1977,connolly2002}. Nonetheless, it has not yet
been unequivocally established beyond what scale the universe may be considered to be homogeneous and isotropic. General contention
is that the universe may be considered homogeneous and isotropic beyond a scale of a few to several hundreds of
Megaparsec, see e.g.\ \cite{Jones:2004}.

An increasing number of observations indicate that although the cosmological principle may be a very accurate description of
physical reality, it may also be in need of either a more precise formulation or, at worst, a reconsideration. Even within the
context of the observed universe, %we know that
it is more accurate to consider the cosmological principle as a statistical
principle. Indeed, since the COBE CMB maps, we know that even on the largest scales the mass distribution in the universe is marked by
a stochastic spatial pattern of density variations (be it of a minute amplitude). In fact, we know that structure in the
universe arose through gravitational growth of these primordial perturbations, and that these were imprinted on all scales.
Since COBE revealed the map of the large scale angular structure of the cosmic microwave background temperature, we know
that at the time of recombination there were structures much larger than the particle horizon at the time. By implication,
this is still true at the current cosmic epoch, so that the visible universe presents us with a minute deviation from
the global cosmological average. Hence, it is more precise to state that the statistical properties of the corresponding
inhomogeneous mass distribution obey the cosmological principle and do not vary with direction and location, but that the
particular realization within our visible universe may not be perfectly uniform, see e.g.\ \cite{Peebles:1993}.

At a more fundamental level, various studies forwarded observational indications for deviations from pure isotropy or
homogeneity, and hence may have serious implications for the validity of FLRW-based cosmologies. An intriguing indication
for possible intrinsic anomalous anisotropies is the finding by Copi et al.\ \cite{copi2004,copi2015} that the
low multipole components of the cosmic microwave background anisotropies are mutually aligned, and aligned with our motion through
the universe (i.e.\ the dipole motion). The existence of this \textit{Axis of Evil} would seriously question the validity of
the isotropy of our universe, and hence the cosmological principle. At an even more critical level, Sarkar and collaborators have
highlighted some serious
problems in the regular interpretation of cosmological observations within the context of the standard FLRW cosmology (see \cite{sarkar_heart_2022} and references therein). One aspect
concerns the fact that the dipole in the CMB is fully ascribed to the gravitational attraction by local inhomogeneities, and hence
can be interpreted as a dipole effect due to our Galaxy moving within the restframe blackbody cavity of the cosmic microwave background
radiation. However, recently Secrest et al.\ \cite{secrest_test_2021} found that the dipole with respect to the high
redshift sample of quasars does not coincide with the CMB dipole, which it would be expected to do when our Galaxy's motion is solely
a local effect. It may imply the existence of an anomalously large dipole that is not of local origin and directly calls
into question the validity of the assumed isotropy of the universe. This finding is augmented by a potentially even
more disruptive result that questions the reality of dark energy and instead may indicate a bulk flow on a cosmological
scale. Colin et al.\ \cite{colin_evidence_2019} inferred from a sample of 740 Supernovae Ia that the cosmological
acceleration parameter is direction dependent, which may imply that we are embedded in a deep bulk flow that conflicts with the
assumption of an isotropic universe.

\subsection*{Bianchi models as anisotropic alternatives}
Following the indications for the reality of possible deviations from the cosmological principle, the purpose of the present study is to explore cosmologies that do not entail uniformity. With the ultimate goal to extend this to the geometric structure of fully inhomogeneous and anisotropic cosmologies, we here first explore the implications for cosmologies that are anisotropic, yet homogeneous. No longer demanding the symmetry of isotropy, there is a substantial extension of possible cosmic metrics with respect to the 3 types of highly symmetric RW metrics. They are generally called \emph{Bianchi models.}

%These anisotropic, homogeneous models are known as \emph{Bianchi models}, derived from the classification scheme for three-dimensional Lie algebras first introduced by Italian mathematician Luigi Bianchi in 1898 \cite{bianchi1898}, and subsequently popularized by Ellis et al.\ \cite{ellis_class_1969} for use in cosmology. They provide a framework for studying universes that are perfectly homogeneous yet have varying degrees of anisotropy.

In this work, we investigate the mathematical basis of the Bianchi models from the perspective of the infinitesimal isometries, or \emph{Killing vector fields} (KVFs), belonging to the metrics involved. Namely, requirements on homogeneity and isotropy of a metric manifest themselves mathematically as such infinitesimal isometries. Denoting by $g$ the metric under consideration, a KVF $\xi$ is characterized by the following equations:
\begin{equation}\label{eq:intro-killing}
    \mathcal{L}{_\xi}g=0,\qor\tensor{g}{_\mu_\nu_{,\alpha}}\tensor{\xi}{^\alpha}+2\tensor{g}{_\alpha_{(\mu}}\tensor{\xi}{^\alpha_{,\nu)}}=0,\qor\tensor{\xi}{_{(\mu;\nu)}}=0.
\end{equation}
where $\mathcal{L}$ indicates Lie derivative (also known as Lie drag), comma notation indicates differentiation, and semicolon notation indicates covariant differentiation. For instance, if $g$ is taken as a Robertson-Walker metric,
% % Old:
% RW metric, \eqref{eq:rw-metric},  %% no longer introducted above
one will find exactly six solutions for $\xi$ to this equation. Moreover, due to the property of Lie derivatives that $\mathcal{L}_{[\xi,\eta]}=[\mathcal{L}_\xi,\mathcal{L}_\eta]$ --- so that the commutator of KVFs is again a KVF --- we see that KVFs form a Lie algebra of vector fields (Killing Lie algebra; KLA). It is known that a metric in dimension $n$ may allow for up to $n(n+1)/2$ KVFs \cite[\S17.1]{ellis_relativistic_2012}, and certainly does not need to reach this bound.
\begin{figure}[t]
    \centering
    \begin{tikzpicture}[x=0.75pt,y=0.75pt,yscale=-.75,xscale=.75]
        %uncomment if require: \path (0,565); %set diagram left start at 0, and has height of 565

        %Shape: Circle [id:dp656547228866961]
        \draw  [dash pattern={on 4.5pt off 4.5pt}] (124,356) .. controls (124,342.19) and (135.19,331) .. (149,331) .. controls (162.81,331) and (174,342.19) .. (174,356) .. controls (174,369.81) and (162.81,381) .. (149,381) .. controls (135.19,381) and (124,369.81) .. (124,356) -- cycle ;
        %Shape: Circle [id:dp0028477565854047038]
        \draw  [dash pattern={on 4.5pt off 4.5pt}] (86.5,356) .. controls (86.5,321.48) and (114.48,293.5) .. (149,293.5) .. controls (183.52,293.5) and (211.5,321.48) .. (211.5,356) .. controls (211.5,390.52) and (183.52,418.5) .. (149,418.5) .. controls (114.48,418.5) and (86.5,390.52) .. (86.5,356) -- cycle ;
        %Shape: Circle [id:dp613457029173922]
        \draw  [dash pattern={on 4.5pt off 4.5pt}] (49,356) .. controls (49,300.77) and (93.77,256) .. (149,256) .. controls (204.23,256) and (249,300.77) .. (249,356) .. controls (249,411.23) and (204.23,456) .. (149,456) .. controls (93.77,456) and (49,411.23) .. (49,356) -- cycle ;
        %Shape: Circle [id:dp13809024001968273]
        \draw  [dash pattern={on 4.5pt off 4.5pt}] (544,356) .. controls (544,342.19) and (555.19,331) .. (569,331) .. controls (582.81,331) and (594,342.19) .. (594,356) .. controls (594,369.81) and (582.81,381) .. (569,381) .. controls (555.19,381) and (544,369.81) .. (544,356) -- cycle ;
        %Shape: Circle [id:dp5076698865923608]
        \draw  [dash pattern={on 4.5pt off 4.5pt}] (506.5,356) .. controls (506.5,321.48) and (534.48,293.5) .. (569,293.5) .. controls (603.52,293.5) and (631.5,321.48) .. (631.5,356) .. controls (631.5,390.52) and (603.52,418.5) .. (569,418.5) .. controls (534.48,418.5) and (506.5,390.52) .. (506.5,356) -- cycle ;
        %Shape: Circle [id:dp06012316690462238]
        \draw  [dash pattern={on 4.5pt off 4.5pt}] (469,356) .. controls (469,300.77) and (513.77,256) .. (569,256) .. controls (624.23,256) and (669,300.77) .. (669,356) .. controls (669,411.23) and (624.23,456) .. (569,456) .. controls (513.77,456) and (469,411.23) .. (469,356) -- cycle ;
        %Shape: Ellipse [id:dp8329470859485233]
        \draw  [dash pattern={on 4.5pt off 4.5pt}] (322.83,134.65) .. controls (322.83,120.69) and (338.5,109.38) .. (357.83,109.38) .. controls (377.16,109.38) and (392.83,120.69) .. (392.83,134.65) .. controls (392.83,148.61) and (377.16,159.93) .. (357.83,159.93) .. controls (338.5,159.93) and (322.83,148.61) .. (322.83,134.65) -- cycle ;
        %Shape: Ellipse [id:dp7374341791212522]
        \draw  [dash pattern={on 4.5pt off 4.5pt}] (304.88,89.18) .. controls (325.11,65.63) and (365.2,66.91) .. (394.45,92.03) .. controls (423.69,117.14) and (431,156.59) .. (410.78,180.13) .. controls (390.56,203.68) and (350.46,202.4) .. (321.22,177.28) .. controls (291.98,152.16) and (284.66,112.72) .. (304.88,89.18) -- cycle ;
        %Shape: Ellipse [id:dp4839379312386103]
        \draw  [dash pattern={on 4.5pt off 4.5pt}] (273.15,202.16) .. controls (246.24,168.4) and (262.86,110.39) .. (310.27,72.6) .. controls (357.69,34.8) and (417.94,31.53) .. (444.85,65.29) .. controls (471.76,99.05) and (455.14,157.06) .. (407.73,194.85) .. controls (360.31,232.65) and (300.06,235.92) .. (273.15,202.16) -- cycle ;
        %Straight Lines [id:da6305686562854705]
        \draw [line width=1.5]    (149,356) -- (566,356) ;
        \draw [shift={(569,356)}, rotate = 180] [color={rgb, 255:red, 0; green, 0; blue, 0 }  ][line width=1.5]    (28.42,-8.55) .. controls (18.07,-3.63) and (8.6,-0.78) .. (0,0) .. controls (8.6,0.78) and (18.07,3.63) .. (28.42,8.55)   ;
        %Straight Lines [id:da6430843336273963]
        \draw [line width=1.5]    (149,356) -- (356.94,135.9) ;
        \draw [shift={(359,133.72)}, rotate = 133.37] [color={rgb, 255:red, 0; green, 0; blue, 0 }  ][line width=1.5]    (28.42,-8.55) .. controls (18.07,-3.63) and (8.6,-0.78) .. (0,0) .. controls (8.6,0.78) and (18.07,3.63) .. (28.42,8.55)   ;
        %Shape: Circle [id:dp2349649374471272]
        \draw  [fill={rgb, 255:red, 0; green, 0; blue, 0 }  ,fill opacity=1 ] (144,356) .. controls (144,353.24) and (146.24,351) .. (149,351) .. controls (151.76,351) and (154,353.24) .. (154,356) .. controls (154,358.76) and (151.76,361) .. (149,361) .. controls (146.24,361) and (144,358.76) .. (144,356) -- cycle ;
        %Shape: Circle [id:dp16855537504041251]
        \draw  [fill={rgb, 255:red, 0; green, 0; blue, 0 }  ,fill opacity=1 ] (352.83,134.65) .. controls (352.83,131.89) and (355.07,129.65) .. (357.83,129.65) .. controls (360.59,129.65) and (362.83,131.89) .. (362.83,134.65) .. controls (362.83,137.42) and (360.59,139.65) .. (357.83,139.65) .. controls (355.07,139.65) and (352.83,137.42) .. (352.83,134.65) -- cycle ;
        %Shape: Circle [id:dp4010770698039845]
        \draw  [fill={rgb, 255:red, 0; green, 0; blue, 0 }  ,fill opacity=1 ] (564,356) .. controls (564,353.24) and (566.24,351) .. (569,351) .. controls (571.76,351) and (574,353.24) .. (574,356) .. controls (574,358.76) and (571.76,361) .. (569,361) .. controls (566.24,361) and (564,358.76) .. (564,356) -- cycle ;
        %Straight Lines [id:da38872596029022044]
        \draw    (149,356) -- (149.88,484.45) ;
        %Straight Lines [id:da9357088221386487]
        \draw    (357.83,134.65) -- (358.71,263.1) ;
        %Straight Lines [id:da048236609919848905]
        \draw    (569,356) -- (569.88,484.45) ;

        % Text Node
        \draw (359,359.4) node [anchor=north] [inner sep=0.75pt]  [font=\large]  {$\xi $};
        % Text Node
        \draw (252,241.46) node [anchor=south east] [inner sep=0.75pt]  [font=\large]  {$\eta $};
        % Text Node
        \draw    (90,54) -- (90,85) -- (56,85)  ;
        \draw (59,58.4) node [anchor=north west][inner sep=0.75pt]  [font=\large]  {$\mathbb{R}^{2}$};
        % Text Node
        \draw  [draw opacity=0][fill={rgb, 255:red, 255; green, 255; blue, 255 }  ,fill opacity=1 ]  (141,316) -- (160,316) -- (160,341) -- (141,341) -- cycle  ;
        \draw (144,320.4) node [anchor=north west][inner sep=0.75pt]    {$1$};
        % Text Node
        \draw  [draw opacity=0][fill={rgb, 255:red, 255; green, 255; blue, 255 }  ,fill opacity=1 ]  (141,282) -- (160,282) -- (160,307) -- (141,307) -- cycle  ;
        \draw (144,286.4) node [anchor=north west][inner sep=0.75pt]    {$2$};
        % Text Node
        \draw  [draw opacity=0][fill={rgb, 255:red, 255; green, 255; blue, 255 }  ,fill opacity=1 ]  (141,243) -- (160,243) -- (160,268) -- (141,268) -- cycle  ;
        \draw (144,247.4) node [anchor=north west][inner sep=0.75pt]    {$3$};
        % Text Node
        \draw  [draw opacity=0][fill={rgb, 255:red, 255; green, 255; blue, 255 }  ,fill opacity=1 ]  (560,316) -- (579,316) -- (579,341) -- (560,341) -- cycle  ;
        \draw (563,320.4) node [anchor=north west][inner sep=0.75pt]    {$1$};
        % Text Node
        \draw  [draw opacity=0][fill={rgb, 255:red, 255; green, 255; blue, 255 }  ,fill opacity=1 ]  (560,282) -- (579,282) -- (579,307) -- (560,307) -- cycle  ;
        \draw (563,286.4) node [anchor=north west][inner sep=0.75pt]    {$2$};
        % Text Node
        \draw  [draw opacity=0][fill={rgb, 255:red, 255; green, 255; blue, 255 }  ,fill opacity=1 ]  (560,243) -- (579,243) -- (579,268) -- (560,268) -- cycle  ;
        \draw (563,247.4) node [anchor=north west][inner sep=0.75pt]    {$3$};
        % Text Node
        \draw  [draw opacity=0][fill={rgb, 255:red, 255; green, 255; blue, 255 }  ,fill opacity=1 ]  (375,112) -- (394,112) -- (394,137) -- (375,137) -- cycle  ;
        \draw (378,116.4) node [anchor=north west][inner sep=0.75pt]    {$1$};
        % Text Node
        \draw  [draw opacity=0][fill={rgb, 255:red, 255; green, 255; blue, 255 }  ,fill opacity=1 ]  (399,95) -- (418,95) -- (418,120) -- (399,120) -- cycle  ;
        \draw (402,99.4) node [anchor=north west][inner sep=0.75pt]    {$2$};
        % Text Node
        \draw  [draw opacity=0][fill={rgb, 255:red, 255; green, 255; blue, 255 }  ,fill opacity=1 ]  (448,86) -- (467,86) -- (467,111) -- (448,111) -- cycle  ;
        \draw (451,90.4) node [anchor=north west][inner sep=0.75pt]    {$3$};
        % Text Node
        \draw (149.88,487.85) node [anchor=north] [inner sep=0.75pt]  [font=\large]  {$p$};
        % Text Node
        \draw (358.71,266.5) node [anchor=north] [inner sep=0.75pt]  [font=\large]  {$q'$};
        % Text Node
        \draw (569.88,487.85) node [anchor=north] [inner sep=0.75pt]  [font=\large]  {$q$};
    \end{tikzpicture}
    \caption{Illustration of effect of Lie dragging, in two dimensions. The dashed loops around $p$, $q$, and $q'$ are level sets of the distance. Dragging the metric from $p$ to $q$ via $\xi$ leaves the level sets intact, so $\xi$ represents an (infinitesimal) isometry of the metric. In contrast, when we drag from $p$ to $q'$ via $\eta$, the level sets are perturbed. This shows the metric has been altered, and hence that $\eta$ is not an (infinitesimal) isometry.}
    \label{fig:LieDrag}
\end{figure}

We will present the mathematical definitions in Sections \S\ref{sec:foundations} and \S\ref{sec:classification}. Intuitively,
\begin{itemize}
    \item \emph{homogeneity} means we can, everywhere, Lie drag with a KVF in any direction, so the metric can be preserved in going in any direction; and
    \item \emph{isotropy} means we have full rotational symmetry, which is defined as KVFs having a nonempty set of fixed points.
\end{itemize}
A Bianchi model is thus a spacetime model wherein we retain homogeneity, yet we have no KVFs which generate a rotational symmetry. It is so named since homogeneity and full anisotropy implies that the KLA has dimension 3, and Lie algebras of dimension 3 were first classified by Bianchi \cite{bianchi1898}.

%Retracting the requirement of isotropy, and replacing it by that of anisotropy while retaining full homogeneity, substantially extends the pool of potential
%metrics %is substantially extended
%with respect to the three RW type metrics --- indeed, we obtain the aforementioned Bianchi models. This is because the KVFs effecting homogeneity will be three in number, and hence the KLA will be three-dimensional.

An example metric of the so-called Bianchi I type is given by
\begin{equation}\label{eq:bianchi-I}
    \dd{s}^2=-\dd{t}^2+a(t)^2\dd{x}^2+b(t)^2\dd{y}^2+c(t)^2\dd{z}^2,
\end{equation}
for $a(t)$, $b(t)$, and $c(t)$ mutually different --- a generalization of the flat RW metric, allowing for different length scales in the different directions. This metric is homogeneous (the translation vector fields $\partial_x$ etc.\ are Killing) yet fully anisotropic, as the ``rotation generating'' vector fields, e.g.\ $x\partial_y-y\partial_x$, fail to be Killing for all $t$. (One could make the KVFs depend on time, but since we wish our symmetries to be purely spatial, this is outside of the scope of this work.)

As a compromise between isotropy and anisotropy, one might consider the metric as in \eqref{eq:bianchi-I} with the restriction $a(t)=b(t)$. This metric then is homogeneous and yet not fully isotropic nor anisotropic: the vector field $x\partial_y-y\partial_x$ is Killing, but $z\partial_x-x\partial_z$ and $y\partial_z-z\partial_y$, the other two rotation-generating vector fields, are not. What we have defined is an example of a \emph{locally rotationally symmetric} (LRS) Bianchi model: a homogeneous model that allows one axis of rotational symmetry. Such models (and in particular LRS Bianchi I models like we exemplified above) have received attention in the literature because, although they are freer than flat FLRW, they are still restricted enough to allow for ease of calculation and interpretation; see, e.g. \cite[\S5.2,\S5.3]{ellis_cosmological_2008}, \cite{elst_covariant_1996}, and \cite{stewart_solutions_1968}. Since in this study we do not wish to consider LRS Bianchi models, we shall discuss it no further.

Bianchi models have, since their introduction in the 1960's \cite{ellis_class_1969}, been investigated extensively in the literature (e.g.\ \cite{collins_why_1973, collins_singularities_1979}) --- see \cite{ellis_bianchi_2006} for a review. A particular draw is that since Bianchi models are still homogeneous, one can make a clear separation between the geometric description, which describes the spatial hypersurfaces at constant time, and the temporal evolution of quantities defined thereon. This then transforms the Einstein equation from a set of (a priori coupled) partial differential equations to one of coupled ordinary differential equations. In this way, it is possible to view model universes as dynamical systems, with relevant parameters (e.g.\ critical density) evolving in cosmic time. The work of Wainwright and collaborators, e.g.\ \cite{wainwright_dynamical_1989, nilsson_dynamical_2000, wainwright_dynamical_2005} is heavily focused on this.

The hope is that, due to their inherent anisotropy, Bianchi models could (partially) explain some of the anisotropies that are currently observed, such as that of the CMB. Particular focus in the literature has been placed on Bianchi types I, V, VII, and IX, as these have properties which makes them likely candidates from a physical perspective (such as isotropization towards late time, or spatial curvature). A calculation detailing contributions of Bianchi-type perturbations to the CMB was performed in \cite{pontzen_rogues_2009}. A Bianchi type VII contribution was tested against the \textit{Planck} data, though it was ruled out \cite{ade_planck_2014}.

\subsection*{Our approach}

A Bianchi model is defined by the KLA of its infinitesimal isometries, which can be computed when given a metric. In order to understand this relation even closer, we ask the reverse question: \emph{given a suitable 3D Lie algebra of vector fields, (how) can we construct an explicit description of a metric for which the given Lie algebra is the KLA, or a subalgebra thereof?} The premise of this question stands in contrast with the usual approach, where at the start a metric frame is chosen so that it is of a specific Bianchi type. Such frames are tabulated, for instance in \cite[\S6.4]{ryan_homogeneous_1975} and \cite[\S8.2]{stephani_exact_2003}.

An answer to our question can be found in the work of Jantzen \cite{jantzen_dynamical_1979}. Here, the author's methodology is to leverage the abstract theory of Lie groups and Lie algebras in order to derive the appropriate metric frames. These frames are then reported in terms of conveniently chosen canonical coordinates, so that the frames are at their simplest. These are the frames that can then also be found in the tabulations referenced above.

Besides the abstract nature of e.g.\ Jantzen's argument (requiring ancillary knowledge of these topics to follow completely), the fact that the metric is presented in a canonical choice of coordinates is suboptimal for some applications. For instance, in situations where a coordinate system has already been imposed (e.g.\ cosmological simulations or experimental apparatus), it may prove difficult and/or inconvenient to change to the ``canonical coordinates'' as found in the existing literature. In such cases, it would be ideal to have a method to find the metric frame \emph{without having to make a specific coordinate choice.} That is, one that works regardless of the coordinatization of the involved desired KVFs.

This has been the aim of the present study, and its fruit is presented in Table \ref{tab:forms} and Section \S\ref{sec:table}. Moreover, we strive to do so in an accessible and pedagogical manner. In this spirit, we only use advanced mathematics to a strictly necessary degree so as to aid this development. For the sake of simplicity and direct applicability, we do not consider Lie algebras of dimensions 4 or higher, and we assume that all isometries are generated by infinitesimal isometries.

\subsection*{Structure of this paper}
In Section \S\ref{sec:table} we present the methodology in answer to our central question. That is, how to go from a set of desired KVFs to a metric frame on which they are indeed realized as KVFs. Subsequently, in Section \S\ref{sec:corollaries}, we apply our method to find the metric frames for some given desired KVFs. Additionally, we show how to \emph{constructively} separate out time-dependence in a given spatially homogeneous metric.

In Section \S\ref{sec:derivation}, we detail our methodology to obtain the result from Section \S\ref{sec:table}. We do this by first, in Sections \S\ref{sec:foundations} and \S\ref{sec:classification}, treat the mathematics pertaining generally to homogeneity, isotropy, and Bianchi models. Then, in Section \S\ref{sec:findingFrame}, we derive concretely the equation(s) to solve, and then proceed to do so via the method of characteristics. After some additional results in Section \S\ref{sec:auxillaries} on uniqueness and naturality, we end the paper with a conclusion and outlook to future research.

\subsection*{Notation}
Throughout the paper we assume the Einstein summation index when an expression contains an upper and lower index. Moreover, as much as possible, indices from the start of the Roman/Greek alphabet are dummy indices, whilst those from mid-alphabet are free. This indexing convention allows for quick distinction between the two. Roman indices are assumed to run $1,2,3$, whilst Greek run $0,1,2,3$.

\section{From KVFs to metric frame}\label{sec:table}
% From subsection \ref{subsec:relations} we learned that if we know the IF to some pre-KLA, we can also calculate the IF belonging to a $\operatorname{GL}(3)$-transformed pre-KLA, and from \ref{subsec:uniqueness} that any found IF --- with a given initial condition --- is unique to the pre-KLA. Combining these two observations, we see that it has value to find expressions for the IFs belonging to pre-KLAs in the various Bianchi classes. The methodology from subsection \ref{subsec:construction} allows us to do this, given only the structure constants of a pre-KLA.

% This has been done here, with the Table \ref{tab:forms} as its result. When given a pre-KLA $\{\tensor{\xi}{_a}\}$ and a reference point $b$, the utilization of the Table is as follows.
In this section we discuss how to answer our central question: when given a desired set of KVFs, how to find a metric on which they are realized as such. We shall do so by first discussing a bare minimum of preliminaries, and then detailing the steps to follow in order to find a metric \emph{frame.} Any metric with time-dependent components, written in that frame, will then satisfy the query.

Indicate by $\Sigma^3$ the (spatial) three-dimensional hypersurface which we regard as space, and we assume to be simply connected. This will be necessary to ensure the uniqueness of this construction -- see \S\ref{sec:foundations} for further details. For our construction, we need the desired KVFs, $\{\tensor{\xi}{_I}\}_{I=1}^3$ to satisfy some properties.
\begin{enumerate}
    \item They form a 3D Lie algebra, which will be (a sub-algebra of) the Lie algebra of KVFs (KLA) for our to-be-found metric.
    \item At each point in $\Sigma^3$, they are linearly independent; the $\{\tensor{\xi}{_I}\}$ form a frame.
\end{enumerate}
Let us call such a set of desired KVFs a \emph{pre-KLA.}

Our construction will proceed with finding an appropriate \emph{frame} for the metric, which is dubbed the \emph{invariant frame} (IF). In said frame, then, if the metric components depend only on time, we can guarantee that our pre-KLA will indeed be the KLA (or a subalgebra) thereof. Symbolically, as also found in \eqref{eq:basisInsight}:
\begin{equation}
    \{\tensor{X}{_i}\}:\text{\eqref{eq:commutationProperty} holds}\implies \{\tensor{\xi}{_I}\}\text{ are Killing on }g=\tensor{g}{_\mu_\nu}(t)\,\tensor{e}{^\mu}\tensor{e}{^\nu},
\end{equation}
where \eqref{eq:commutationProperty} is $[\tensor{\xi}{_I},\tensor{X}{_i}]=0$, for $\{\tensor{\xi}{_I}\}$ a pre-KLA, and $\{\tensor{e}{^\mu}\}_{\mu=0}^3$ is the dual of $\{\tensor{X}{_\mu}\}$ (set $\tensor{e}{^0}=\dd{t}\iff\tensor{X}{_0}=\pdv*{t}$). In Section \S\ref{sec:derivation} we present our derivation of this problem statement.

The long and short of it, as also forms the conclusion of Section \S\ref{sec:auxillaries}, is that we can algorithmize the solution to $[\tensor{\xi}{_I},\tensor{X}{_i}]=0$, and hence to finding the appropriate metric frame $\{\tensor{e}{^\mu}\}$, by means of the following construction.
\newpage
\begin{textFrame}[frametitle = \colorbox{white}{From pre-KLA to metric frame}]
    Let $\{\tensor{\xi}{_I}\}$ be a pre-KLA (defined above).
    \begin{enumerate}
        \item Perform a $\operatorname{GL}(3)$-transformation $R$ on $\{\tensor{\xi}{_I}\}$, so that its structure constants match one of the items in Table \ref{tab:forms}. (If the structure constants already match, set $R:=\operatorname{id}$.) Dub $\tensor{{\xi'}}{_I}:=R\tensor{\xi}{_I}$.
        \item For $\{\tensor{{\xi'}}{_I}\}$, solve the displayed differential equations for $p(\vb*{x})$, $q(\vb*{x})$, and $r(\vb*{x})$, with initial conditions $p(0)=0$ etc.\
        \item Use $\{\tensor{{\xi'}}{_I}\}$ and the found $p$, $q$, and $r$ to fill in the given relations. We obtain the vector fields $\tensor{{X'}}{_i}$.
        \item Calculate $\tensor{X}{_i}=R^{-1}\tensor{{X'}}{_i}$ with $R$ from the first step.
        \item Find the duals to $\{\tensor{X}{_i}\}$, i.e.\ those 1-forms $\{\tensor{e}{^i}\}$ such that $\langle\tensor{e}{^i},\tensor{X}{_j}\rangle=\tensor*{\delta}{^i_j}$. (Here $\langle\cdot,\cdot\rangle$ indicates vector-covector contraction.)
        \item Set $\tensor{e}{^0}=\dd{t}$, where $t$ is interpreted as time.
    \end{enumerate}
    Then, any spacetime metric
    \begin{equation}\label{eq:separatedFormRecipe}
        g=\tensor{g}{_\mu_\nu}(t)\,\tensor{e}{^\mu}\tensor{e}{^\nu},
    \end{equation}
    %\textcolor{red}{!!THIS FRAME IS BEING SPLIT IN HALF BY THE TABLE!!}
    where $\tensor{g}{_\mu_\nu}(t)$ are free functions of time (taking into account symmetry and Lorentzian signature) will have $\{\tensor{\xi}{_I}\}$ as (a subset of) its KVFs.
\end{textFrame}

\begin{table*}[h!]
    \centering
    \scriptsize
    \renewcommand{\arraystretch}{1.5}
    \begin{tabular}{|l|cc|}
        \hline
        \textbf{Bianchi I:} all vanish & & \\ \cline{1-1} \multicolumn{3}{|c|}{$\tensor{X}{_i}=\tensor{\xi}{_i}$} \\ \hline\hline
        \textbf{Bianchi II:} $\tensor*{C}{^1_2_3}=1$ & & \\ \cline{1-1} \multicolumn{2}{|c}{$\begin{cases}\tensor{X}{_1}=\tensor{\xi}{_1} \\ \tensor{X}{_2}=r\tensor{\xi}{_1}+\tensor{\xi}{_2} \\ \tensor{X}{_3}=-q\tensor{\xi}{_1}+\tensor{\xi}{_3} \end{cases}$} & $q:\begin{cases}\tensor{\xi}{_1}q=0 \\ \tensor{\xi}{_2}q=1 \\ \tensor{\xi}{_3}q=0\end{cases} r:\begin{cases}\tensor{\xi}{_1}r=0 \\ \tensor{\xi}{_2}r=0 \\ \tensor{\xi}{_3}r=1\end{cases}$ \\ \hline\hline
        \textbf{Bianchi III:} $\tensor*{C}{_1_3^1}=1$ & & \\ \cline{1-1} \multicolumn{2}{|c}{$\begin{cases}\tensor{X}{_1}=e^r\tensor{\xi}{_1} \\ \tensor{X}{_2}=\tensor{\xi}{_2} \\ \tensor{X}{_3}=-p\tensor{\xi}{_1}+\tensor{\xi}{_3} \end{cases}$} & $p:\begin{cases}\tensor{\xi}{_1}p=1 \\ \tensor{\xi}{_2}p=0 \\ \tensor{\xi}{_3}p=p\end{cases} r:\begin{cases}\tensor{\xi}{_1}r=0 \\ \tensor{\xi}{_2}r=0 \\ \tensor{\xi}{_3}r=1\end{cases}$ \\ \hline\hline
        \textbf{Bianchi IV:} $\tensor*{C}{_1_3^1}=\tensor*{C}{_2_3^1}=\tensor*{C}{_2_3^2}=1$ & & \\ \cline{1-1} \multicolumn{2}{|c}{$\begin{cases}\tensor{X}{_1}=e^r\tensor{\xi}{_1} \\ \tensor{X}{_2}=re^r\tensor{\xi}{_1}+e^r\tensor{\xi}{_2} \\ \tensor{X}{_3}=-(p+q)\tensor{\xi}{_1}-q\tensor{\xi}{_2}+\tensor{\xi}{_3}\end{cases}$} & $p:\begin{cases} \tensor{\xi}{_1}(p)=1 \\ \tensor{\xi}{_2}(p)=0 \\ \tensor{\xi}{_3}(p)=p+q\end{cases} q:\begin{cases} \tensor{\xi}{_1}q=0 \\ \tensor{\xi}{_2}q=1 \\ \tensor{\xi}{_3}q=q\end{cases} r:\begin{cases}\tensor{\xi}{_1}r=0 \\ \tensor{\xi}{_2}r=0 \\ \tensor{\xi}{_3}r=e^r\end{cases}$ \\ \hline\hline
        \textbf{Bianchi V:} $\tensor*{C}{_1_3^1}=\tensor*{C}{_2_3^2}=1$ & & \\ \cline{1-1} \multicolumn{2}{|c}{$\begin{cases} \tensor{X}{_1}=e^r\tensor{\xi}{_1} \\ \tensor{X}{_2}=e^r\tensor{\xi}{_2} \\ \tensor{X}{_3}=-p\tensor{\xi}{_1}-q\tensor{\xi}{_2}+\tensor{\xi}{_3} \end{cases}$} & $p:\begin{cases} \tensor{\xi}{_1}p=1 \\ \tensor{\xi}{_2}p=0 \\ \tensor{\xi}{_3}p=p \end{cases} q:\begin{cases}\tensor{\xi}{_1}q=0 \\ \tensor{\xi}{_2}q=1 \\ \tensor{\xi}{_3}q=q\end{cases} r:\begin{cases}\tensor{\xi}{_1}r=0 \\ \tensor{\xi}{_2}r=0 \\ \tensor{\xi}{_3}r=1\end{cases}$ \\ \hline\hline
        \textbf{Bianchi VI$_h$:} $\tensor*{C}{_1_3^1}=1$ and $\tensor*{C}{_2_3^2}=h$ ($h\neq0,1$) & & \\ \cline{1-1} \multicolumn{2}{|c}{$\begin{cases} \tensor{X}{_1}=e^r\tensor{\xi}{_1} \\ \tensor{X}{_2}=e^{hr}\tensor{\xi}{_2} \\ \tensor{X}{_3}=-p\tensor{\xi}{_1}-q\tensor{\xi}{_2}+\tensor{\xi}{_3} \end{cases}$} & $p:\begin{cases} \tensor{\xi}{_1}p=1 \\ \tensor{\xi}{_2}p=0 \\ \tensor{\xi}{_3}p=p \end{cases} q:\begin{cases} \tensor{\xi}{_1}q=0 \\ \tensor{\xi}{_2}q=h \\ \tensor{\xi}{_3}q=hq \end{cases} r:\begin{cases} \tensor{\xi}{_1}r=0 \\ \tensor{\xi}{_2}r=0 \\ \tensor{\xi}{_3}r=1\end{cases}$ \\ \hline\hline
        \textbf{Bianchi VII$_h$:} $\tensor*{C}{_1_3^2}=\tensor*{C}{_3_2^1}=1$ and $\tensor*{C}{_2_3^2}=h$ ($h^2<4$) & & \\ \cline{1-1} \multicolumn{2}{|c}{$\begin{cases} \tensor{X}{_1}=e^{hr/2}\left(\cos(\Delta r)-\tfrac{1}{2}h\Delta^{-1}\sin(\Delta r)\right)\tensor{\xi}{_1} \\ \qquad+e^{hr/2}\Delta^{-1}\sin(\Delta r)\tensor{\xi}{_2} \\ \tensor{X}{_2}=-e^{hr/2}\Delta^{-1}\sin(\Delta r)\tensor{\xi}{_1} \\ \qquad+e^{hr/2}\left(\cos(\Delta r)+\tfrac{1}{2}h\Delta^{-1}\sin(\Delta r)\right)\tensor{\xi}{_2} \\ \tensor{X}{_3}=q\tensor{\xi}{_1}-(p+q)\tensor{\xi}{_2}+\tensor{\xi}{_3} \end{cases}$} & $\begin{aligned}&p:\begin{cases} \tensor{\xi}{_1}p=1 \\ \tensor{\xi}{_2}p=0 \\ \tensor{\xi}{_3}p=q \end{cases} q:\begin{cases} \tensor{\xi}{_1}q=0 \\ \tensor{\xi}{_2}q=-1 \\ \tensor{\xi}{_3}q=p+hq \end{cases} \\ &r:\begin{cases} \tensor{\xi}{_1}r=0 \\ \tensor{\xi}{_2}r=0 \\ \tensor{\xi}{_3}r=-\Delta^2(\tfrac{1}{4}h^2+\Delta^2)^{-1} \end{cases}\end{aligned}$ \\ \hline\hline
        \textbf{Bianchi VIII:} $\tensor*{C}{_3_2^1}=\tensor*{C}{_3_1^2}=\tensor*{C}{_1_2^3}=1$ & & \\ \cline{1-1} \multicolumn{2}{|c}{$\begin{cases}\tensor{X}{_1}=\cosh(q)\cosh(r)\tensor{\xi}{_1}\\ \qquad+(\cosh(r)\sin(p)\sinh(q)-\cos(p)\sinh(r))\tensor{\xi}{_2} \\ \qquad+(\cos(p)\cosh(r)\sinh(q)+\sin(p)\sinh(r))\tensor{\xi}{_3} \\ \tensor{X}{_2}=-\cosh(q)\sinh(r)\tensor{\xi}{_1} \\ \qquad+(\cos(p)\cosh(r)-\sin(p)\sinh(q)\sinh(r))\tensor{\xi}{_2} \\ \qquad-(\cosh(r)\sin(p)+\cos(p)\sinh(q)\sinh(r))\tensor{\xi}{_3} \\ \tensor{X}{_3}=\sinh(q)\tensor{\xi}{_1}+\cosh(q)\sin(p)\tensor{\xi}{_2}+\cos(p)\cosh(q)\tensor{\xi}{_3}\end{cases}$} & $\begin{aligned}&p:\begin{cases}\tensor{\xi}{_1}p=1 \\ \tensor{\xi}{_2}p=-\tanh(q)\sin(p) \\ \tensor{\xi}{_3}p=-\tanh(q)\cos(p) \end{cases} q:\begin{cases}\tensor{\xi}{_1}q=0 \\ \tensor{\xi}{_2}q=\cos(p) \\ \tensor{\xi}{_3}q=-\sin(p) \end{cases} \\ & r:\begin{cases}\tensor{\xi}{_1}r=0 \\ \tensor{\xi}{_2}r=\sin(p)/\cosh(q) \\ \tensor{\xi}{_3}r=\cos(p)/\cosh(q)\end{cases}\end{aligned}$ \\ \hline\hline
        \textbf{Bianchi IX:} $\tensor*{C}{_2_3^1}=\tensor*{C}{_3_1^2}=\tensor*{C}{_1_2^3}=1$ & & \\ \cline{1-1} \multicolumn{2}{|c}{$\begin{cases} \tensor{X}{_1}=\cos(q)\cos(r)\tensor{\xi}{_1} \\ \qquad+\left(\cos(r)\sin(p)\sin(q)-\cos(p)\sin(r)\right)\tensor{\xi}{_2} \\ \qquad+\left(\cos(p)\cos(r)\sin(q)+\sin(p)\sin(r)\right)\tensor{\xi}{_3} \\ \tensor{X}{_2}=\cos(q)\sin(r)\tensor{\xi}{_1}\\ \qquad+\left(\cos(p)\cos(r)+\sin(p)\sin(q)\sin(r)\right)\tensor{\xi}{_2} \\ \qquad+\left(\cos(p)\sin(q)\sin(r)-\cos(r)\sin(p)\right)\tensor{\xi}{_3} \\ \tensor{X}{_3}=-\sin(q)\tensor{\xi}{_1}+\cos(q)\sin(p)\tensor{\xi}{_2}+\cos(q)\cos(p)\tensor{\xi}{_3} \end{cases}$} & $\begin{aligned}&p:\begin{cases}\tensor{\xi}{_1}p=1 \\ \tensor{\xi}{_2}p=\tan(q)\sin(p) \\ \tensor{\xi}{_3}p=\tan(q)\cos(p)\end{cases} q:\begin{cases}\tensor{\xi}{_1}q=0 \\ \tensor{\xi}{_2}q=\cos(p) \\ \tensor{\xi}{_3}q=-\sin(p)\end{cases} \\ & r:\begin{cases}\tensor{\xi}{_1}r=0 \\ \tensor{\xi}{_2}r=\sin(p)/\cos(q) \\ \tensor{\xi}{_3}r=-\cos(p)/\cos(q)\end{cases}\end{aligned}$ \\ \hline
    \end{tabular}
    \caption{General forms of the IFs $\{\tensor{X}{_i}\}$ belonging to pre-KLAs $\{\tensor{\xi}{_a}\}$ of the various Bianchi types. The functions $p=p(\vb*{x})$, $q=q(\vb*{x})$, and $r=r(\vb*{x})$ must be solved for any particular choice of pre-KLA, to satisfy the indicated relations. In Bianchi VII$_h$, $\Delta:=\tfrac{1}{2}\sqrt{4-h^2}$. See Section \S\ref{sec:table} in the text.}
    \label{tab:forms}
    \vfill
\end{table*}

Some remarks on this procedure are in order.
\begin{itemize}
    \item Strictly speaking, step (iv) is optional to perform; one can also simply ``forget'' about the prime and move on with the rest of the construction.
    \item In step (vi), there is considerable freedom in the choice of $t$. This corresponds to a ``choice of time axis,'' and will be touched upon on in Section \S\ref{sec:findingFrame}.
\end{itemize}

In the next section, we will provide some examples of this construction in action, as well as utilizing this to actively bring an appropriate metric (i.e.\ a metric that has a pre-KLA as a subalgebra of its KLA) into the form of \eqref{eq:separatedFormRecipe}.

\section{Applying our result}\label{sec:corollaries}

\subsection{Example I --- Bianchi III}
Suppose we wish to find a spacetime metric so that the following vector fields, adapted from \cite{osetrin_exact_2023}, are Killing:
\begin{equation}
    \tensor{\xi}{_1}=\tensor{\partial}{_2},\quad\tensor{\xi}{_2}=\tensor{\partial}{_1}+\tensor{\partial}{_3}+\tensor{x}{^2}\tensor{\partial}{_2},\qand\tensor{\xi}{_3}=\tensor{\partial}{_3}+\tensor{x}{^2}\tensor{\partial}{_2}.
\end{equation}
We calculate the commutators as $[\tensor{\xi}{_1},\tensor{\xi}{_2}]=\tensor{\xi}{_1}$, $[\tensor{\xi}{_2},\tensor{\xi}{_3}]=0$, and $[\tensor{\xi}{_3},\tensor{\xi}{_1}]=-\tensor{\xi}{_1}$; the structure constants are thus $\tensor{C}{^1_1_2}=\tensor{C}{^1_1_3}=1$ and remainder vanishing. This is of the Bianchi III type: the class of algebras that are not nilpotent but whose derived algebra is 1-dimensional.

According to step (i) we need to perform a $\operatorname{GL}(3)$ transformation transforming the structure constants to $\tensor{C}{^1_1_3}=1$ and remainder vanishing. This is accomplished by
\begin{equation}
    R=\pmqty{1 & 0 & 0 \\ 0 & 1 & -1 \\ 0 & 0 & 1}\iff\begin{cases}\tensor{{\xi'}}{_1}=\tensor{\xi}{_1} \\ \tensor{{\xi'}}{_2}=\tensor{\xi}{_2}-\tensor{\xi}{_3} \\ \tensor{{\xi'}}{_3}=\tensor{\xi}{_3}\end{cases}.
\end{equation}
Then we solve the differential equations for $p$ and $r$ ($q$ does not feature for Bianchi III) using the \emph{primed} $\xi$. We compute that $p(\vb*{x})=\tensor{x}{^2}$ and $r(\vb*{x})=\tensor{x}{^3}$, so upon substitution we find
\begin{equation}
    \tensor{{X'}}{_1}=\exp(\tensor{x}{^3})\tensor{{\xi'}}{_1},\quad\tensor{{X'}}{_2}=\tensor{{\xi'}}{_2},\qand\tensor{{X'}}{_3}=\tensor{{\xi'}}{_3}-\tensor{x}{^2}\tensor{{\xi'}}{_1}.
\end{equation}
We now perform the inverse of $R$ in order to obtain the unprimed $X$, and also substitute in the coordinate description of $\{\tensor{\xi}{_I}\}$:
\begin{equation}
    \tensor{X}{_1}=\exp(\tensor{x}{^3})\tensor{\partial}{_2},\quad\tensor{X}{_2}=\tensor{\partial}{_1}+\tensor{\partial}{_3},\qand\tensor{X}{_3}=\tensor{\partial}{_3}.
\end{equation}
The duals are then found to be
\begin{equation}
    \tensor{e}{^1}=\exp(-\tensor{x}{^3})\dd{\tensor{x}{^2}},\quad\tensor{e}{^2}=\dd{\tensor{x}{^1}},\qand\tensor{e}{^3}=\dd{\tensor{x}{^3}}-\dd{\tensor{x}{^1}}
\end{equation}
Finally, set $\tensor{e}{^0}=\dd{t}$. Then, any spacetime metric of the form
\begin{equation}
    g=\tensor{g}{_\mu_\nu}(t)\,\tensor{e}{^\mu}\tensor{e}{^\nu}
\end{equation}
will have $\tensor{\xi}{_1}$, $\tensor{\xi}{_2}$, and $\tensor{\xi}{_3}$ among its KVFs. For the particular choice $\tensor{g}{_\mu_\nu}=\operatorname{diag}(-1,a(t)^2,b(t)^2,c(t)^2)$, for instance, this metric takes the shape
\begin{equation}
    g=-\dd{t}^2+a^2\exp(-2\tensor{x}{^3})(\dd{\tensor{x}{^2}})^2+b^2(\dd{\tensor{x}{^1}})^2+c^2(\dd{\tensor{x}{^1}}-\dd{\tensor{x}{^3}})^2.
\end{equation}

\subsection{Example II --- twice Bianchi II}
Consider the following two Lie algebras:
\begin{equation}
    \begin{cases} \tensor{\xi}{_1}=\tensor{\partial}{_2} \\ \tensor{\xi}{_2}=\tensor{\partial}{_3} \\ \tensor{\xi}{_3}=\tensor{\partial}{_1}+\tensor{x}{^3}\tensor{\partial}{_2}\end{cases}\qand\quad\begin{cases} \tensor{\eta}{_1}=\tensor{\partial}{_1} \\ \tensor{\eta}{_2}=\tensor{\partial}{_2}-\tfrac{1}{2}\tensor{x}{^3}\tensor{\partial}{_1} \\ \tensor{\eta}{_3}=\tensor{\partial}{_3}+\tfrac{1}{2}\tensor{x}{^2}\tensor{\partial}{_1}\end{cases}.
\end{equation}
The first of these is type II as found in \cite[\S6.4]{ryan_homogeneous_1975}, and the second is the Heisenberg algebra. Both have the tabulated structure constants $\tensor{C}{^1_2_3}=1$ belonging to type II. As both algebras are written in the $\{\tensor{\partial}{_i}\}$-basis we can find a linear transformation mapping one to the other: $\phi:\tensor{\xi}{_I}\mapsto\tensor{\eta}{_I}$, $I=1,2,3$. In the $\{\tensor{\partial}{_i}\}$-basis, it is expressed as
\begin{equation}
    \phi=\pmqty{\tfrac{1}{2}\tensor{x}{^2}-\tensor{x}{^3} & 1 & -\tfrac{1}{2}\tensor{x}{^3} \\ 0 & 0 & 1 \\ 1 & 0 & 0}.
\end{equation}
We note that this is not a Jacobian matrix; hence, there does not exist a diffeomorphism that connects the Lie algebras, despite both being of the same type. This shows, as observed already in~\cite{jantzen_dynamical_1979}, that tables that list only the first type, such as our source, are incomplete in the sense that an inequivalent model of the same type also exists.

Following the steps of Section \S\ref{sec:table} for both Lie algebras, we can find that the IFs for both are given by
\begin{equation}
    \begin{cases} \tensor{X}{_1}=\tensor{\partial}{_2} \\ \tensor{X}{_2}=\tensor{x}{^1}\tensor{\partial}{_2}+\tensor{\partial}{_3} \\ \tensor{X}{_3}=\tensor{\partial}{_1} \end{cases}\qand\quad\begin{cases}\tensor{Y}{_1}=\tensor{\partial}{_1} \\ \tensor{Y}{_2}=\tensor{\partial}{_2}+\tfrac{1}{2}\tensor{x}{^3}\tensor{\partial}{_1} \\ \tensor{Y}{_3}=\tensor{\partial}{_3}-\tfrac{1}{2}\tensor{x}{^2}\tensor{\partial}{_1}\end{cases},
\end{equation}
where the $X$s correspond to the $\xi$s, and the $Y$s to the $\eta$s.\footnote{In fact, if $\phi$ were the Jacobian matrix of a diffeomorphism, then due to the naturality as discussed in Section \S\ref{sec:auxillaries} it would have to map $\phi:\tensor{X}{_i}\mapsto\tensor{Y}{_i}$, $i=1,2,3$; we can verify by direct calculation that this does not occur.} The dual frames follow:
\begin{equation}
    \begin{cases}\tensor{e}{^1}=\dd{\tensor{x}{^2}}-\tensor{x}{^1}\dd{\tensor{x}{^3}} \\ \tensor{e}{^2}=\dd{\tensor{x}{^3}} \\ \tensor{e}{^3}=\dd{\tensor{x}{^1}}\end{cases}\qand\begin{cases} \tensor{f}{^1}=\dd{\tensor{x}{^1}}+\tfrac{1}{2}\left(\tensor{x}{^2}\dd{\tensor{x}{^3}}-\tensor{x}{^3}\dd{\tensor{x}{^2}}\right) \\ \tensor{f}{^2}=\dd{\tensor{x}{^2}} \\ \tensor{f}{^3}=\dd{\tensor{x}{^3}}\end{cases}.
\end{equation}

Assuming a diagonal spatial metric $\tensor{\gamma}{_i_j}=\operatorname{diag}(a(t)^2,b(t)^2,c(t)^2)$ in the $\{\tensor{e}{^i}\}$- and $\{\tensor{f}{^i}\}$-bases, the spatial parts of the metrics when written out in coordinates become
\begin{subequations}
    \begin{equation}\label{eq:stb2-1}
        \begin{split}
            \dd{\ell_\text{RS}^2}&=\tensor{\gamma}{_i_j}\tensor{e}{^i}\tensor{e}{^j} \\
            &=a^2\left(\dd{\tensor{x}{^1}}\right)^2+b^2\left(\dd{\tensor{x}{^3}}\right)^2+c^2\left(\dd{\tensor{x}{^2}}-\tensor{x}{^1}\dd{\tensor{x}{^3}}\right)^2
        \end{split}
    \end{equation}
    and
    \begin{equation}\label{eq:stb2-2}
        \begin{split}
            \dd{\ell_\text{H}^2}&=\tensor{\gamma}{_i_j}\tensor{f}{^i}\tensor{f}{^j} \\
            &=a^2\left(\dd{\tensor{x}{^1}}+\tfrac{1}{2}\left(\tensor{x}{^2}\dd{\tensor{x}{^3}}-\tensor{x}{^3}\dd{\tensor{x}{^2}}\right)\right)^2+b^2\left(\dd{\tensor{x}{^2}}\right)^2+c^2\left(\dd{\tensor{x}{^3}}\right)^2.
        \end{split}
    \end{equation}
\end{subequations}
One can then verify that, regardless of the coefficients, $\{\tensor{\xi}{_I}\}$ are Killing on \eqref{eq:stb2-1} and $\{\tensor{\eta}{_I}\}$ are Killing on \eqref{eq:stb2-2}.

\subsection{Separating the metric}
The existence of the IF when given a pre-KLA not only allows us to formulate metrics on which the pre-KLA is realized as a KLA, but also to rewrite given homogeneous metrics into a standard form, wherein the metric components depend only on time. This is evidently useful for GR-related purposes, as then the Einstein equation will reduce to coupled ODEs, instead of PDEs, simplifying the solving procedure. According to \cite{scholtens_inpub_2025}, we have the following Theorem.
\begin{mythm}\label{thm:separated}
    Let $g$ be a spatially homogeneous metric. Then there exists a spatial frame of 1-forms $\{\tensor{e}{^\mu}\}$ and coefficients $\tensor{g}{_\mu_\nu}(t)$ such that
    \begin{equation}
        g=\tensor{g}{_\mu_\nu}(t)\,\tensor{e}{^\mu}\tensor{e}{^\nu}.
    \end{equation}
\end{mythm}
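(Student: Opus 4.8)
The plan is to reuse the invariant frame constructed in Section \ref{sec:preKLA}, now applied to the Killing Lie algebra that spatial homogeneity already furnishes, and then to show that the components of the given metric in the associated coframe are automatically functions of $t$ alone. First I would extract from the hypothesis a $3$-dimensional transitive KLA $\{\tensor{\xi}{_a}\}$ whose orbits are the spatial sections; writing $\mathcal{M}=\mathbb{R}\times\Sigma^3$ with $t$ the transverse coordinate, I would treat $\{\tensor{\xi}{_a}\}$ as a pre-KLA on a slice and invoke the existence-and-uniqueness result for its invariant frame $\{\tensor{X}{_i}\}$, extended $t$-independently over $\mathcal{M}$. Adjoining $\tensor{X}{_0}=\tensor{\partial}{_0}$ with dual $\tensor{e}{^0}=\dd{t}$ exactly as in the forward construction yields a full frame $\{\tensor{X}{_\mu}\}$ with dual coframe $\{\tensor{e}{^\mu}\}$, all of which commute with the KVFs: $[\tensor{\xi}{_a},\tensor{X}{_\mu}]=0$.

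Next I would define the candidate coefficients by $\tensor{g}{_\mu_\nu}:=g(\tensor{X}{_\mu},\tensor{X}{_\nu})$. Because $\{\tensor{X}{_\mu}\}$ is a frame with dual $\{\tensor{e}{^\mu}\}$, the tautological expansion $g=g(\tensor{X}{_\mu},\tensor{X}{_\nu})\,\tensor{e}{^\mu}\otimes\tensor{e}{^\nu}$ holds identically (both sides agree on every pair $(\tensor{X}{_\alpha},\tensor{X}{_\beta})$), so it remains only to show that $\tensor{g}{_\mu_\nu}$ is constant along each spatial slice. This is the heart of the argument and follows from a single Lie-derivative computation:
\begin{equation*}
\begin{split}
\tensor{\xi}{_a}(\tensor{g}{_\mu_\nu})&=\mathcal{L}_{\tensor{\xi}{_a}}\!\left[g(\tensor{X}{_\mu},\tensor{X}{_\nu})\right]\\
&=(\mathcal{L}_{\tensor{\xi}{_a}}g)(\tensor{X}{_\mu},\tensor{X}{_\nu})+g([\tensor{\xi}{_a},\tensor{X}{_\mu}],\tensor{X}{_\nu})+g(\tensor{X}{_\mu},[\tensor{\xi}{_a},\tensor{X}{_\nu}]).
\end{split}
\end{equation*}
The first term vanishes because the $\tensor{\xi}{_a}$ are Killing for $g$ (they constitute its KLA), and the remaining two vanish by the invariant-frame property $[\tensor{\xi}{_a},\tensor{X}{_\mu}]=0$. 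Since the pre-KLA is transitive, the $\{\tensor{\xi}{_a}\}$ span the tangent space of each connected spatial section, so $\tensor{\xi}{_a}(\tensor{g}{_\mu_\nu})=0$ for all $a$ forces $\tensor{g}{_\mu_\nu}$ to be constant on each slice, i.e.\ a function of $t$ only. Substituting back gives $g=\tensor{g}{_\mu_\nu}(t)\,\tensor{e}{^\mu}\tensor{e}{^\nu}$.

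The main obstacle is the setup step rather than the computation: one must secure a product decomposition $\mathcal{M}=\mathbb{R}\times\Sigma^3$ together with a time axis $\tensor{X}{_0}$ satisfying $[\tensor{\xi}{_a},\tensor{X}{_0}]=0$, so that the frame $\{\tensor{X}{_\mu}\}$ genuinely commutes with the entire KLA and the conclusion ``derivative zero along every $\tensor{\xi}{_a}$'' really does pin $\tensor{g}{_\mu_\nu}$ down to a function of $t$. I expect the cleanest route is to take $\tensor{X}{_0}$ proportional to the $g$-unit normal of the homogeneity foliation: since each $\tensor{\xi}{_a}$ generates an isometry that preserves the foliation by surfaces of transitivity, its flow preserves this canonically defined normal field, yielding $[\tensor{\xi}{_a},\tensor{X}{_0}]=0$, and using the normal flow to identify slices renders the $\tensor{\xi}{_a}$ manifestly $t$-independent. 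With that identification in hand, the remainder is precisely the construction of Section \ref{sec:preKLA} read in reverse.
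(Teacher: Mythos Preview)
Your proposal is correct and follows essentially the same route as the paper: take the KVFs as a pre-KLA, build the invariant frame $\{\tensor{X}{_i}\}$ via Section~\ref{sec:preKLA}, adjoin $\tensor{X}{_0}=\tensor{\partial}{_t}$, and use the Leibniz identity for $\mathcal{L}_{\tensor{\xi}{_a}}[g(\tensor{X}{_\mu},\tensor{X}{_\nu})]$ to kill all three terms and conclude the components depend only on $t$. Your final paragraph on securing $[\tensor{\xi}{_a},\tensor{X}{_0}]=0$ via the $g$-unit normal is actually more careful than the paper, which simply asserts this follows from the $\tensor{\xi}{_i}$ being ``purely spatial.''
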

\begin{proof}
    Let $\{\tensor{\xi}{_I}\}_{I=1,2,3}$ be those KVFs of $g$ which are purely spatial and form a frame everywhere, which exist due to spatial homogeneity. Since the KLA spanned by $\{\tensor{\xi}{_I}\}$ is a pre-KLA, we can find the IF $\{\tensor{X}{_i}\}_{i=1,2,3}$ by the method of Section \S\ref{sec:table}. Set $\tensor{X}{_0}\equiv\tensor{\partial}{_0}=\tensor{\partial}{_t}$, and then observe that we have
    \begin{equation}
        \tensor{\xi}{_I}[g(\tensor{X}{_\mu},\tensor{X}{_\nu})]=(\mathcal{L}_{\xi_I}g)(\tensor{X}{_\mu},\tensor{X}{_\nu})+g([\tensor{\xi}{_I},\tensor{X}{_\mu}],\tensor{X}{_\nu})+g(\tensor{X}{_\mu},[\tensor{\xi}{_I},\tensor{X}{_\nu}])=0,
    \end{equation}
    ($\mu,\nu=0,1,2,3$), where the first term vanishes due to $\tensor{\xi}{_I}$ being a KVF, and the second and third term due to the IF commuting with KVFs. That $[\tensor{\xi}{_I},\tensor{X}{_0}]=0$ follows from $\tensor{\xi}{_I}$ being purely spatial. Since the KVFs form a frame, this means that the functions $g(\tensor{X}{_\mu},\tensor{X}{_\nu})$ are constant in space, and hence are dependent only on time: $g(\tensor{X}{_\mu},\tensor{X}{_\nu})=g(\tensor{X}{_\mu},\tensor{X}{_\nu})(t)$.

    Given the IF $\{\tensor{X}{_i}\}$, we can uniquely find its dual frame of 1-forms $\{\tensor{e}{^i}\}$. Moreover, with the added vector field $\tensor{X}{_0}=\tensor{\partial}{_t}\leftrightarrow\tensor{e}{^0}=\dd{t}$, we still maintain the desired dual frame relations $\tensor{e}{^\mu}(\tensor{X}{_\nu})=\tensor*{\delta}{^\mu_\nu}$. When we then write the metric in said basis,
    \begin{equation}
        g=g(\tensor{X}{_\mu},\tensor{X}{_\nu})\,\tensor{e}{^\mu}\tensor{e}{^\nu},
    \end{equation}
    we see that we have exactly recovered the form of the metric we wanted to show exists.
\end{proof}

That this can be done is not a new idea \cite[\S9.1]{ryan_homogeneous_1975}, and in some sense defines what it means to be spatially homogeneous (or also known as the synchronous gauge). Yet, we emphasize that the method of Section \S\ref{sec:table}, makes the proof direct and \emph{constructive.} That is, any spatially homogeneous metric, which may be locally presented in \emph{any} (coordinate) frame, can be constructively put into the hypothesized form, i.e.\ with components dependent only on time.

%We emphasize that the proof is \emph{constructive:} due to Table \ref{tab:forms}, once the KVFs are determined, the IF is explicitly determined after having solved for the functions $p$, $q$, and $r$ from that table. Furthermore, the proof shows that any spatially homogeneous metric can be written into a form in which the temporal and spatial dependencies --- generally $\tensor{e}{^\mu}=\tensor{e}{^\mu}(\vb*{x})$ --- are separated. As such, we dub this to be the \emph{separated form} of the spatially homogeneous metric.

\section{Deriving the result}\label{sec:derivation}
In this section we derive the construction that we presented in Section \S\ref{sec:table}. We do so by first laying the groundwork concerning symmetries arising from Lie groups/algebras. Then we introduce spatial homogeneity, and with that context, Bianchi models and the Bianchi classification for 3D Lie algebras. Subsequently, the main idea of the construction, namely \eqref{eq:commutationProperty}, is introduced, after which this equation is solved along the lines of the method of characteristics. We end the section with some auxiliary properties, being uniqueness and diffeomorphism and $\operatorname{GL}(3)$ naturality.

\subsection{Foundations of symmetries}\label{sec:foundations}
\begin{mydef}[Isometry]
    Let $(\mathcal{M}, g)$ a (pseudo-)Riemannian manifold. We say that a diffeomorphism $\phi$ from $\mathcal{M}$ to itself is an \emph{isometry} if the pullback of the metric by $\phi$ leaves it invariant, that is, $\phi^*g=g$. We denote the resulting collection of isometries of a given space with known metric as $\operatorname{Isom}(\mathcal{M})$.
\end{mydef}
The set of all isometries forms a group, acting on the manifold. Since we assumed that $\mathcal{M}$ is simply connected, we forego any global topological considerations. Alternatively, we could say that we assume no isometries to be discrete.

Then, by the Myers-Steenrod theorem, this group is a \emph{Lie} group. As such, we may also consider the infinitesimal version of the isometries, i.e.\ the corresponding elements in the Lie algebra. These are captured by vector fields, which are known as \emph{Killing vector fields} (KVFs). A KVF $\xi$ has the (defining) property that
\begin{equation}\label{eq:define-kvf}
    \mathcal{L}_\xi g=0,
\end{equation}
where $\mathcal{L}$ denotes the Lie derivative. Since all KVFs belonging to a certain metric form a Lie algebra under commutation of vector fields, we may speak of a metric's Killing Lie algebra (KLA).

% In a local coordinate system $\{\tensor{x}{^i}\}$, we can write an explicit equation to solve for a KVF's components. In said coordinate system, expand the KVF as $\xi=\tensor{\xi}{^i}\tensor{\partial}{_i}$, where $\tensor{\partial}{_i}$ indicates the derivative in the $\tensor{x}{^i}$-direction, and then apply this to the equation
% \begin{equation}
%     \mathcal{L}_\xi\left[g(\tensor{\partial}{_i},\tensor{\partial}{_j})\right]=\left(\mathcal{L}_\xi g\right)(\tensor{\partial}{_i},\tensor{\partial}{_j})+g\left(\mathcal{L}_\xi\tensor{\partial}{_i},\tensor{\partial}{_j}\right)+g\left(\tensor{\partial}{_i},\mathcal{L}_\xi\tensor{\partial}{_j}\right),
% \end{equation}
% from which we can deduce that
% \begin{equation}
%     \tensor{\xi}{^k}\tensor{g}{_i_j_{,k}}=-\tensor{\xi}{^k_{,i}}\tensor{g}{_k_j}-\tensor{\xi}{^k_{,j}}\tensor{g}{_i_k},
% \end{equation}
% where comma notation was used to indicate partial differentiation ($\tensor{\ph}{_{,a}}\equiv\tensor{\partial}{_a}\ph$). Given the metric, this defines a collection of PDEs to solve for the $\tensor{\xi}{^i}$. If instead covariant differentiation is used, the lhs vanishes and we are left with the famous equation $\tensor{\xi}{_i_{;j}}+\tensor{\xi}{_j_{;i}}=0$ defining KVFs.

In general a metric may allow for multiple KVFs; the maximum allowed is $n(n+1)/2$, where $n$ is the dimension of the manifold \cite[\S17.1]{ellis_relativistic_2012}. This also means that the dimension $r$ of the isometry group is bounded by $r\leq n(n+1)/2$. While on a general manifold it is possible that no KVF is allowed for a metric, this is never the case in the class of spacetime manifolds considered in this work: we wish to investigate homogeneous spacetimes, and as we shall see below, this means they have KVFs.

Once we have our infinitesimal descriptions of isometries, we can quantify the orbit of a starting point under these infinitesimal isometries. To this end, we need the following definition.
\begin{mydef}[Transitivity]
    A (sub)group of isometries $G\leq\operatorname{Isom}(\mathcal{M})$ is said to act \emph{transitively} on some submanifold $S$ if $G(s)=S$ \cite[\S5]{ellis_cosmological_2008} for all $s\in S$. In this case $S$ is the orbit of the isometries, also referred to as \emph{surfaces of transitivity.} The dimension of the transitivity subgroup $\dim S$ we denote by $s$.
\end{mydef}
Effectively, through transitivity we infer an equivalence relation on $\mathcal{M}$ of which points can be reached via an isometry. However, there may be ``redundancy'' in the surface of transitivity; we may reach individual points via more than one isometry. We say that $G$ acts \emph{simply transitively} on the surface of transitivity if there is precisely one isometry that connects pairs of points. Or, equivalently, when the infinitesimal generators of isometries are linearly independent as vector fields \cite[\S6.2]{ryan_homogeneous_1975}. Otherwise, we call $G$ \emph{multiply transitive}.

We also need to touch on the concept of \emph{isotropy}. The isotropy subgroup (of isometries) at some $p\in\mathcal{M}$ consists of those isometries which leave the point $p$ unchanged:
\begin{equation}
    \operatorname{Stab}(p):=\{\phi\in\operatorname{Isom}(\mathcal{M}):\phi(p)=p\}.
\end{equation}
The symbol Stab comes from the alternative name for the isotropy subgroup, being the \emph{stabilizer}. Directly, this implies that the infinitesimal generators of those isometries, i.e.\ their associated KVFs, vanish at $p$.
\begin{mylem}
    %% Old:
    % The dimension of the isotropy subgroup $\dim\operatorname{Stab}\equiv q$ is constant over a surface of transitivity.
    %% New:
    Let $S$ be a surface of transitivity.
    The dimension of the isotropy subgroup $\dim\operatorname{Stab}(p) \equiv q$ is constant for all points $p\in S$ in the same surface of transitivity.
\end{mylem}
\begin{proof}
    Suppose $a,b \in S$ lie in the same surface of transitivity $S$, and let $\phi$ be an isometry so that $\phi(a)=b$. We notice that we can send an isotropy transformation at $a$ to one at $b$ by noticing
    \begin{equation}
        \begin{split}
            \psi\in\operatorname{Stab}(b)\implies\phi^{-1}\circ \psi\circ\phi\in\operatorname{Stab}(a)&\implies \psi\circ\phi\in\phi\circ\operatorname{Stab}(a) \\
            &\implies\operatorname{Stab}(b)\circ\phi\subseteq\phi\circ\operatorname{Stab}(a)
        \end{split}
    \end{equation}
    By interchanging the roles of $a$ and $b$, the reverse inclusion can also be established, so that we have equality. Hence, $\operatorname{Stab}(a)\cong\operatorname{Stab}(b)$, and so they have the same dimension.
\end{proof}
One also says that the surface of transitivity is \emph{completely anisotropic} if the isotropy subgroup is trivial (and so generated by the zero vector field).

Following the above Lemma, we recognize that $\dim\operatorname{Stab}(S)$ is a well-defined entity for a surface of transitivity $S$. We can then make an important conclusion about the structure of symmetries, by noticing the following: we have that
\begin{equation}\label{eq:total-isom-dim}
    \dim\operatorname{Isom}(S)=\dim\operatorname{Stab}(S)+\dim S
\end{equation}
\cite[\S5.1]{ellis_cosmological_2008}; or, $r=q+s$. Here, $\operatorname{Isom}(S)$ indicates the group of isometries defined on the surface of transitivity. That this is true can be seen heuristically by considering that KVFs (which generate isometries) either vanish at some point on $S$ or they do not. If no KVF vanishes anywhere on $S$, then evidently $\dim\operatorname{Isom}(S)=\dim S$; if not, then $\dim\operatorname{Stab}(S)$ is the maximum number of KVFs that simultaneously vanish at some $p\in S$. Then $S$ must be of dimension $\dim\operatorname{Isom}(S)-\dim\operatorname{Stab}(S)$, as that is how many KVFs at said point still generate the surface of transitivity.

\subsection{Bianchi's classification \& models}\label{sec:classification}
We now narrow our symmetry considerations down to the case of interest, namely that of a 4-dimensional pseudo-Riemannian manifold $(\mathcal{M},g)$.
\begin{mydef}[Spatial homogeneity]
    We say that if $(\mathcal{M},g)$ permits a KLA of dimension 3 transitive on spatial hypersurfaces, then $(\mathcal{M},g)$ is \emph{spatially homogeneous}. We may also shorten this to homogeneous if it is clear from context that spatial homogeneity is meant. A surface of transitivity generated by such a KLA is called a \emph{spatial section}.
\end{mydef}
It is then clear that we should consider the spatial sections as ``space,'' with the remaining dimension fulfilling the role of ``time.'' This gives rise to the split $\mathcal{M}=\mathbb{R}\times S$, a consequence of the assumption of \emph{global hyperbolicity} that is often present in treatises of general relativity (e.g.\ \cite[Thm.\ 5.44]{landsman_foundations_2022}, \cite[\S4.2.1]{gourgoulhon_31_2012}), and implicitly made in the study of cosmology. Using the symmetry language built up in the previous section, we can then articulate precisely what is meant with a Bianchi model.
\begin{mydef}[Bianchi model]\label{def:bianchi-models}
    A \emph{Bianchi model} is a spatially homogeneous spacetime, with a KLA of dimension precisely 3.
\end{mydef}
\begin{myrem}[Differing nomenclature]
    One may also adhere to the definition that a Bianchi model is a spacetime which is spatially homogeneous, without the dimensionality restriction. Under this definition, locally rotationally symmetric (LRS) Bianchi models and Kantowski-Sachs models \cite{kantowski_spatially_1966, goncalves_kantowski-sachs_2022} (which have a KLA of dimension 4) and the FLRW models (dimension 6) will be special cases of Bianchi models, instead of being completely separated. Since for our purposes we wish to restrict to three-dimensional KLAs, we make the additional dimensionality requirement.
\end{myrem}
In view of equation \eqref{eq:total-isom-dim}, it follows immediately that a Bianchi model is completely anisotropic. Furthermore, the vector fields which compose the KLA \emph{define a frame} for the tangent space $T_pS$ for all $p\in S$: indeed, if somewhere they did not define a frame, then locally the spatial section would be of a dimension smaller than 3, which is not possible due to \eqref{eq:total-isom-dim}.
\begin{myrem}[Different choices $q$, $s$ in \eqref{eq:total-isom-dim}]
    Effectively in the definition of Bianchi models we prescribed dimensions for the isotropy subgroup and surfaces of transitivity. Since mathematically there is nothing special about these choices, others can be considered as well. For instance, \cite[\S5.2]{ellis_cosmological_2008} presents a table outlining the valid choices of dimensions, and the names of the resulting models (which includes the aforementioned LRS Bianchi models and Kantowski-Sachs models).
\end{myrem}

Now that we have defined what a Bianchi model is, we may introduce a further level of detail by invoking the Lie algebraic structure of the KLA. Since for Bianchi models the dimension of the KLA is 3, a Bianchi model is effectively characterized by a 3-dimensional Lie algebra --- we shall write its basis as $\{\tensor{\xi}{_I}\}_{I=1,2,3}$. Note that here the subscripts do not indicate coordinate dependence, but are simply the labels of the vector fields.

There are infinitely many Lie algebras of dimension 3, but if we construct equivalence classes through $\operatorname{GL}(3,\mathbb{R})$-invariance of the Lie algebra's basis, it turns out that there are only nine distinct classes of Lie algebras \cite{bianchi1898}. One could classify these in various ways that one sees fit, and depending on the intended use. In the astrophysics literature it is common to consider the \emph{Bianchi(-Behr)} classification of Lie algebras. This is given in Table \ref{tab:bianchis}, with additional mathematical information \cite{ellis_bianchi_2006, patera_invariants_1976, jacobson_lie_1961}.
\begin{table}[h!]
    \centering
    \renewcommand{\arraystretch}{1.5}
    \begin{tabular}{c|c|c|c|c||c|c}
         $a$ & $n_1$ & $n_2$ & $n_3$ & Bianchi type & $\operatorname{dim}[\mathfrak{g},\mathfrak{g}]$ & comments \\
        \hline\hline
        0 & 0 & 0 & 0 & I & 0 & abelian; $\mathfrak{i}^{\oplus3}$ \\
        \hline
        0 & 1 & 0 & 0 & II & 1 & nilpotent \\
        \hline
        0 & 0 & 1 & -1 & VI$_0$ & 2 & \\
        \hline
        0 & 0 & 1 & 1 & VII$_0$ & 2 & \\
        \hline
        0 & -1 & 1 & 1 & VIII & 3 & $\operatorname{sl}(2)$\\
        \hline
        0 & 1 & 1 & 1 & IX & 3 & $\operatorname{su}(2)\cong\operatorname{so}(3)$ \\
        \hline\hline
        1 & 0 & 1 & -1 & III & 1 & $\mathfrak{l}(2)\oplus\mathfrak{i}$ \\
        \hline
        1 & 0 & 0 & 1 & IV & 2 & \\
        \hline
        1 & 0 & 0 & 0 & V & 2 & \\
        \hline
        $\sqrt{-h}$ & 0 & 1 & -1 & VI$_h$ & 2 & $h<0$ \\
        \hline
        $\sqrt{h}$ & 0 & 1 & 1 & VII$_h$ & 2 & $h>0$ \\
    \end{tabular}
    \caption{Table of Bianchi types, together with further mathematical detail. Here, $\mathfrak{i}$ refers to the 1-dimensional Lie algebra, and $\mathfrak{l}(2)$ to the irreducible 2-dimensional Lie algebra.}
    \label{tab:bianchis}
\end{table}
The structure constants are determined from the table's entries as
% \begin{equation}\label{eq:struc-constants}
%     \begin{cases}
%         [\xi_1,\xi_2]=n_3\xi_3-a\xi_2 \\
%         [\xi_2,\xi_3]=n_1\xi_1 \\
%         [\xi_3,\xi_1]=n_2\xi_2+a\xi_3
%     \end{cases}.
% \end{equation}
\begin{equation}\label{eq:struc-constants}
    [\tensor{\xi}{_1},\tensor{\xi}{_2}]=n_3\tensor{\xi}{_3}-a\tensor{\xi}{_2},\quad[\tensor{\xi}{_2},\tensor{\xi}{_3}]=n_1\tensor{\xi}{_1},\qand[\tensor{\xi}{_3},\tensor{\xi}{_1}]=n_2\tensor{\xi}{_2}+a\tensor{\xi}{_3}.
\end{equation}
That this indeed yields a valid classification (i.e.\ every Lie algebra belongs to precisely one of these types by a suitable $\operatorname{GL}(3)$ transformation of its basis) is shown in \cite{ellis_class_1969}. The classes are thus composed of Lie algebras which, when their basis is suitably chosen, have the prototypical structure constants.

We are then able to articulate what we mean when we consider metrics of Bianchi type $n$.
\begin{mydef}[Bianchi type $n$ metric]
    A spacetime model $(\mathcal{M},g)$ (and in particular its metric) is of \emph{Bianchi type} $n$ if it is
    \begin{enumerate}
        \item spatially homogeneous,
        \item its KLA has dimension precisely 3, and
        \item the KLA is of Bianchi type $n$, i.e.\ it permits a basis with structure constants as found through Table \ref{tab:bianchis} and equation \eqref{eq:struc-constants}, for Bianchi type $n$.
    \end{enumerate}
    We call the metric $g$ of such a Bianchi model a Bianchi type $n$ metric.
\end{mydef}
For further details, see \cite[\S2.7.3\,\&\,\S17.1.3]{ellis_relativistic_2012} and \cite[\S5]{ellis_cosmological_2008}.

\subsection{To frame a Killing}\label{sec:findingFrame}
The above theory will be applied hinging on the following observation. Let $g$ be a spatially homogeneous metric, with nowhere-vanishing KVFs $\{\tensor{\xi}{_I}\}_{I=1}^3$ which form a frame for space everywhere. Now suppose we have another frame for space $\{\tensor{X}{_i}\}_{i=1}^3$ --- and so its coframe $\{\tensor{e}{^i}\}$ --- satisfying the property
\begin{equation}\label{eq:commutationProperty}
    [\tensor{\xi}{_I},\tensor{X}{_i}]=0.
\end{equation}
Owing to this property, let us call $\{\tensor{X}{_i}\}$ the \emph{invariant frame} (IF), and $\{\tensor{e}{^i}\}$ the \emph{dual frame} (DF). Then, we can draw the following conclusions.
\begin{enumerate}
    \item \emph{The frame $\{\tensor{X}{_i}\}$ is also a Lie algebra.} Let $\tensor{D}{^k_i_j}(\vb*{x})$ be the \emph{functions} such that $[\tensor{X}{_i},\tensor{X}{_j}]=\tensor{D}{^a_i_j}\tensor{X}{_a}$. Then Lie differentiating:
    \begin{equation}
        \mathcal{L}_{\tensor{\xi}{_I}}\left(\tensor{D}{^a_i_j}\tensor{X}{_a}\right)=(\tensor{\xi}{_I}\tensor{D}{^a_i_j})\tensor{X}{_a}\qand\mathcal{L}_{\tensor{\xi}{_I}}[\tensor{X}{_i},\tensor{X}{_j}]=[\tensor{\xi}{_I},[\tensor{X}{_i},\tensor{X}{_j}]]\overset{\star}{=}0,
    \end{equation}
    where $\star$ is due to the Jacobi identity. Hence, $\tensor{\xi}{_I}\tensor{D}{^k_i_j}=0$, in particular for all choices of $I$ given fixed $i,j,k$. Since the $\{\tensor{\xi}{_I}\}$ are spanning, the $\tensor{D}{^k_i_j}(\vb*{x})$ must in fact be constant. Hence, the $\{\tensor{X}{_i}\}$ have constant commutation relations, and thus form a Lie algebra. The $\tensor{D}{^k_i_j}$ are then the structure constants.
    \item \emph{The Lie derivative of the coframe vanishes.} Indeed, since
    \begin{equation}
        0=\mathcal{L}_{\tensor{\xi}{_I}}\langle\tensor{e}{^i},\tensor{X}{_j}\rangle=\langle\mathcal{L}_{\tensor{\xi}{_I}}\tensor{e}{^i},\tensor{X}{_j}\rangle+\cancel{\langle\tensor{e}{^i},\mathcal{L}_{\tensor{\xi}{_I}}\tensor{X}{_j}\rangle}
    \end{equation}
    and the $\{\tensor{X}{_i}\}$ are spanning, $\mathcal{L}_{\tensor{\xi}{_I}}\tensor{e}{^i}$ vanishes for all choices of $i,k$.
\end{enumerate}
In particular from the second observation, we see that objects of the form $\tensor{\gamma}{_i_j}\tensor{e}{^i}\otimes\tensor{e}{^j}$ for \emph{constant} $\tensor{\gamma}{_i_j}$ vanish under Lie differentiation by the $\{\tensor{\xi}{_I}\}$. By choosing the coefficients $\{\tensor{\gamma}{_i_j}\}$ symmetrically and in a positive definite fashion, $\tensor{\gamma}{_i_j}\tensor{e}{^i}\tensor{e}{^j}$ becomes a Riemannian metric on $\Sigma^3$ --- now with known KVFs.

Hinging on \eqref{eq:commutationProperty} being satisfied, we can thus find spatial metrics on which a pre-KLA is indeed realized as a KLA. In order to promote this construction to the spacetime case, we leverage global hyperbolicity by simply ``adding a time axis'' as an additional parameter . Let $t:\mathcal{M}\to\mathbb{R}$ be a function whose existence is afforded to us by global hyperbolicity. Then set $\tensor{X}{_0}=\pdv{t}\iff\tensor{e}{^0}=\dd{t}$, and consider now the spacetime (co-)frames $\{\tensor{X}{_\mu}\}_{\mu=0}^3$ and $\{\tensor{e}{^\mu}\}_{\mu=0}^3$. The additional commutation relation is $[\tensor{X}{_0},\tensor{X}{_i}]=0$, as the $\{\tensor{X}{_i}\}$ should be independent of $t$. The reasoning as above can be repeated, but now for tensors of the form $\tensor{\gamma}{_\mu_\nu}(t)\,\tensor{e}{^\mu}\tensor{e}{^\nu}$: the added dependence of the coefficients on $t$ has no effect, as the purely spatial pre-KLA $\{\tensor{\xi}{_I}\}$ does not ``see'' the time dependence.

The freedom one has on how to choose the cosmic time function $t$ should also be reflected in this formalism. It is found by doing a time-dependent rotation of $\{\tensor{X}{_\mu}\}$, and thereby specifically of $\tensor{X}{_0}$, which is the gradient of $t$: by altering this, effectively the ``time axis'' is altered. It may be done by defining the new frame $\{\tensor{Y}{_\mu}\}$ such that $\tensor{Y}{_\mu}\!(t)=\tensor{y}{^\alpha_\mu}(t)\,\tensor{X}{_\alpha}$, and then ensuring that the relations $[\tensor{Y}{_0},\tensor{Y}{_i}]=0$ and $[\tensor{Y}{_i},\tensor{Y}{_j}]=\tensor{D}{^a_i_j}\tensor{Y}{_a}$ still hold. If the $\tensor{y}{^\mu_0}(t)$ are specified (i.e.\ the new time axis chosen), the $\tensor{y}{^\mu_i}(t)$ can be chosen so that both relations are satisfied --- see \cite[\S6.3]{ryan_homogeneous_1975} for details.

Summarizing: when given a pre-KLA $\{\tensor{\xi}{_I}\}$, if we can find a frame $\{\tensor{X}{_i}\}$ so that \eqref{eq:commutationProperty} is satisfied, then a metric written in its coframe with coefficients depending only on time will have the pre-KLA as its KLA. Or, more symbolically: given the pre-KLA $\{\tensor{\xi}{_I}\}$,
\begin{equationFrame}[frametitle = \colorbox{white}{Invariant frame is desired metric frame}]
    \begin{equation}\label{eq:basisInsight}
        \{\tensor{X}{_i}\}:\text{\eqref{eq:commutationProperty} holds}\implies \{\tensor{\xi}{_I}\}\text{ are Killing on }g=\tensor{g}{_\mu_\nu}(t)\,\tensor{e}{^\mu}\tensor{e}{^\nu},
    \end{equation}
\end{equationFrame}
where $\langle\tensor{e}{^\mu},\tensor{X}{_\nu}\rangle=\tensor*{\delta}{^\mu_\nu}$ and $\tensor{e}{^0}=\dd{t}$. Hence, it is clear that we need to investigate when/how this first equality \eqref{eq:commutationProperty} is satisfied or not. Recalling that $\{\tensor{\xi}{_I}\}$ is a frame, let us expand the $\{\tensor{X}{_i}\}$ in that basis and write down the relation \eqref{eq:commutationProperty}:
\begin{equation}
    \begin{split}
        \tensor{X}{_i}=\tensor*{X}{_i^A}(\vb*{x})\,\tensor{\xi}{_A}\implies[\tensor{\xi}{_I},\tensor*{X}{_i^A}\tensor{\xi}{_A}]=0&\implies\left(\tensor{\xi}{_I}\tensor{X}{_i^A}+\tensor{X}{_i^B}\tensor{C}{^A_I_B}\right)\tensor{\xi}{_A}=0 \\
        &\implies\tensor{\xi}{_I}\tensor{X}{_i^J}=-\tensor{C}{^J_I_A}\tensor{X}{_i^A}.
    \end{split}
\end{equation}
The above should hold for all choices of $i,I,J$. We recognize this to be a linear differential equation in the \emph{components} $\tensor{X}{_i^I}$, with \emph{directional derivatives} $\tensor{\xi}{_I}$.

The solution strategy follows along the lines of the method of characteristics \cite[\S3.2]{evans_partial_2010}. Assume that we parametrize by $s$ a path in local coordinates $(\tensor{x}{^1},\tensor{x}{^2},\tensor{x}{^3})$, whose tangent is given by the vector $T(\vb*{x})=\tensor{\tilde{T}}{^a}(\vb*{x})\tensor{\partial}{_a}=\tensor{T}{^A}(\vb*{x})\tensor{\xi}{_A}$, i.e.\ so that $\dv{\tensor{x}{^i}}{s}=\tensor{\tilde{T}}{^i}$. Then, examining the change of $\tensor{X}{_i^j}$ \emph{along this path,} we find the following equation:
\begin{equation}\label{eq:equationAlongFlows}
    \dv{\tensor{X}{_i^I}}{s}=\pdv{\tensor{X}{_i^I}}{\tensor{x}{^a}}\dv{\tensor{x}{^a}}{s}=\tensor{\tilde{T}}{^a}\tensor{\partial}{_a}\tensor{X}{_i^I}=\tensor{T}{^A}\tensor{\xi}{_A}\tensor{X}{_i^I}=-\tensor{C}{^I_A_B}\tensor{T}{^A}\tensor{X}{_i^B}.
\end{equation}
The evolution of the components $\tensor{X}{_i^j}$ along this path is therefore a 9 (3$\times$3) dimensional first order ODE. Therefore, by standard arguments from ODE theory, a solution exists and is unique provided that we supply an initial condition.

The initial condition we choose is to demand that at some $p\in\Sigma^3$, we have that $\tensor{X}{_i}|_p=\tensor{\xi}{_i}|_p$, i.e.\ that $\tensor{X}{_i^I}|_p=\tensor*{\delta}{^I_i}$. Namely, following \cite[\S6.3]{ryan_homogeneous_1975}, we will then have that
\begin{equation}
    \tensor{D}{^k_i_j}\tensor{X}{_k^A}\tensor{\xi}{_A}=\tensor{D}{^k_i_j}\tensor{X}{_k}=[\tensor{X}{_i},\tensor{X}{_j}]=[\tensor{X}{_i^A}\tensor{\xi}{_A},\tensor{X}{_j^B}\tensor{\xi}{_B}]=\tensor{X}{_j^A}\tensor{X}{_i^C}\tensor{C}{^B_A_C}\tensor{\xi}{_B},
\end{equation}
so that
\begin{equation}
    \tensor{D}{^k_i_j}\tensor{X}{_k^I}=\tensor{X}{_j^A}\tensor{X}{_i^B}\tensor{C}{^I_A_B}.
\end{equation}
Since both $\tensor{D}{^k_i_j}$ and $\tensor{C}{^K_I_J}$ are sets of structure constants, we can find one in terms of the other by simply looking at a single point; doing so at $p$ yields $\tensor{D}{^k_i_j}=\tensor{C}{^k_j_i}=-\tensor{C}{^k_i_j}$. This also shows that the IF $\{\tensor{X}{_i}\}$ and pre-KLA $\{\tensor{\xi}{_I}\}$ are isomorphic as Lie algebras.

Let us make two additional remarks.
\begin{enumerate}
    \item We can choose to examine the especially simple case $\tensor{T}{^I}=\tensor*{\delta}{^I_J}$, for some given $J$. Essentially, this follows the flow only of the vector field $\tensor{\xi}{_J}$. This flow is called the \emph{characteristic} of $\tensor{\xi}{_J}$.
    \item Assuming the above remark, we can then view the equation \eqref{eq:equationAlongFlows} also as a matrix differential equation with constant entries. Define $\tensor{\vec{X}}{_i}:=\pmqty{\tensor{X}{_i^1} & \tensor{X}{_i^2} & \tensor{X}{_i^3}}^\text{T}$, so that \eqref{eq:equationAlongFlows} becomes
    \begin{equation}\label{eq:evolutionAlongCharacteristic}
        \dv{s}\tensor{\vec{X}}{_i}=-\vb{C}_I\tensor{\vec{X}}{_i}\implies\tensor{\vec{X}}{_i}(s)=\exp(-\vb{C}_Is)\tensor{\vec{X}}{_i_{,0}},
    \end{equation}
    where $\vb{C}_I$ is a matrix with entries $(\vb{C}_I)^J_K=\tensor{C}{^J_I_K}$. So, whenever we evolve the components $\tensor{X}{_i^I}$ along the characteristic $\tensor{\xi}{_I}$, we can utilize formula \eqref{eq:evolutionAlongCharacteristic}. The $\vb{C}_I$ is the image of $\tensor{\xi}{_I}$ under the adjoint representation of the pre-KLA -- see also Remark~\ref{rem:jantzen}.
\end{enumerate}

With these two remarks, we can finally create a recipe.
%\newpage
\begin{textFrame}[frametitle = \colorbox{white}{Invariant frame from pre-KLA}]
    Given a pre-KLA $\{\tensor{\xi}{_I}\}$, let $\{\tensor{X}{_i}\}$ be its IF, with $p$ such that $\tensor{X}{_i^I}|_p=\tensor{\delta}{_i^I}$. In order to find the components of the IF $\tensor{X}{_i^I}$ at $q\neq p$, follow these steps.
    \begin{enumerate}
        \item Find a path connecting $p$ to $q$ consisting of at most three segments, where each segment follows a characteristic for some length.
        \item With $s_I$ as the length of each characteristic, find the matrices $\exp(-\vb{C}_Is_I)$ (no sum).%\textcolor{red}{!!!THIS FRAME IS CUT IN HALF HERE!!!}
        \item Left-multiply the matrix of the first segment with that of the second, then left-multiply with that of the third.
        \item The first column of the matrix so obtained yields $\tensor{\vec{X}}{_1}|_q$, the second $\tensor{\vec{X}}{_2}|_q$, and the third $\tensor{\vec{X}}{_3}|_q$.
    \end{enumerate}

\end{textFrame}
Additionally, if we view the various $s_I$ as functions $s_I(q)$, then we have a method by which we can find the value of the invariant frame at any $q$. However, the rub is then that the functions $s_I(q)$ depend non-trivially on $q$. Given an order of characteristics, this non-triviality corresponds to finding the lengths for which you need to follow them in order to arrive at the point of interest.

\begin{figure}[b]
    \centering
    \tikzset{every picture/.style={line width=0.5pt}} %set default line width to 0.75pt
    \begin{tikzpicture}[x=0.75pt,y=0.75pt,yscale=-.75,xscale=.75]
        %uncomment if require: \path (0,573); %set diagram left start at 0, and has height of 573

        %Shape: Axis 2D [id:dp3097562472470441]
        \draw  (175.71,355.52) -- (344.36,355.52)(192.58,217.34) -- (192.58,370.87) (337.36,350.52) -- (344.36,355.52) -- (337.36,360.52) (187.58,224.34) -- (192.58,217.34) -- (197.58,224.34)  ;
        %Straight Lines [id:da17033193089611864]
        \draw    (192.58,355.52) -- (148.72,396.99) -- (102.76,440.45) ;
        \draw [shift={(101.31,441.82)}, rotate = 316.6] [color={rgb, 255:red, 0; green, 0; blue, 0 }  ][line width=0.75]    (10.93,-3.29) .. controls (6.95,-1.4) and (3.31,-0.3) .. (0,0) .. controls (3.31,0.3) and (6.95,1.4) .. (10.93,3.29)   ;
        %Curve Lines [id:da835855320581821]
        \draw [color={rgb, 255:red, 0; green, 0; blue, 0 }  ,draw opacity=1 ][line width=2.25]  [dash pattern={on 2.53pt off 3.02pt}]  (192.58,355.52) .. controls (279.11,355.9) and (379.76,437.54) .. (444.78,400.76) ;
        \draw [shift={(448.71,398.36)}, rotate = 146.73] [fill={rgb, 255:red, 0; green, 0; blue, 0 }  ,fill opacity=1 ][line width=0.08]  [draw opacity=0] (14.29,-6.86) -- (0,0) -- (14.29,6.86) -- cycle    ;
        %Curve Lines [id:da9435815083552619]
        \draw [color={rgb, 255:red, 0; green, 0; blue, 0 }  ,draw opacity=1 ][line width=2.25]  [dash pattern={on 6.75pt off 4.5pt}]  (448.71,398.36) .. controls (431.15,359.49) and (371.7,251.22) .. (288.93,214.22) ;
        \draw [shift={(285.13,212.58)}, rotate = 22.63] [fill={rgb, 255:red, 0; green, 0; blue, 0 }  ,fill opacity=1 ][line width=0.08]  [draw opacity=0] (14.29,-6.86) -- (0,0) -- (14.29,6.86) -- cycle    ;
        %Curve Lines [id:da2624341234693367]
        \draw [color={rgb, 255:red, 0; green, 0; blue, 0 }  ,draw opacity=1 ][line width=2.25]    (285.13,212.58) .. controls (352.22,185.76) and (407.68,181.41) .. (470.93,191.58) .. controls (503.61,196.84) and (538.36,205.98) .. (577.88,217.9) ;
        %Shape: Star [id:dp6689152484355332]
        \draw  [color={rgb, 255:red, 0; green, 0; blue, 0 }  ,draw opacity=1 ][fill={rgb, 255:red, 0; green, 0; blue, 0 }  ,fill opacity=1 ] (579.38,210.4) -- (582.02,215.76) -- (587.94,216.62) -- (583.66,220.79) -- (584.67,226.68) -- (579.38,223.9) -- (574.09,226.68) -- (575.1,220.79) -- (570.82,216.62) -- (576.73,215.76) -- cycle ;
        %Shape: Ellipse [id:dp781074689756038]
        \draw  [fill={rgb, 255:red, 0; green, 0; blue, 0 }  ,fill opacity=1 ] (186.58,355.52) .. controls (186.58,352.2) and (189.27,349.52) .. (192.58,349.52) .. controls (195.89,349.52) and (198.58,352.2) .. (198.58,355.52) .. controls (198.58,358.83) and (195.89,361.52) .. (192.58,361.52) .. controls (189.27,361.52) and (186.58,358.83) .. (186.58,355.52) -- cycle ;
        %Shape: Brace [id:dp5133379620031506]
        \draw  [color={rgb, 255:red, 0; green, 0; blue, 0 }  ,draw opacity=1 ] (206.88,369.9) .. controls (205.41,374.33) and (206.89,377.28) .. (211.32,378.75) -- (299.18,407.89) .. controls (305.51,409.99) and (307.94,413.26) .. (306.47,417.69) .. controls (307.94,413.26) and (311.84,412.09) .. (318.17,414.19)(315.32,413.25) -- (406.03,443.33) .. controls (410.46,444.8) and (413.41,443.32) .. (414.88,438.89) ;
        %Shape: Brace [id:dp16765065544371927]
        \draw  [color={rgb, 255:red, 0; green, 0; blue, 0 }  ,draw opacity=1 ] (465.88,376.9) .. controls (469.73,374.27) and (470.33,371.02) .. (467.7,367.17) -- (419.81,297.2) .. controls (416.04,291.7) and (416.09,287.63) .. (419.94,284.99) .. controls (416.09,287.63) and (412.28,286.2) .. (408.51,280.69)(410.2,283.17) -- (360.62,210.72) .. controls (357.98,206.87) and (354.74,206.26) .. (350.89,208.9) ;
        %Shape: Brace [id:dp3762358821829762]
        \draw  [color={rgb, 255:red, 0; green, 0; blue, 0 }  ,draw opacity=1 ] (568.88,177.9) .. controls (568.88,173.23) and (566.55,170.9) .. (561.88,170.9) -- (441.88,170.9) .. controls (435.21,170.9) and (431.88,168.57) .. (431.88,163.9) .. controls (431.88,168.57) and (428.55,170.9) .. (421.88,170.9)(424.88,170.9) -- (301.88,170.9) .. controls (297.21,170.9) and (294.88,173.23) .. (294.88,177.9) ;

        % Text Node
        \draw (370.08,365.57) node [anchor=north west][inner sep=0.75pt]  [font=\large,color={rgb, 255:red, 0; green, 0; blue, 0 }  ,opacity=1 ]  {$\xi _{1}$};
        % Text Node
        \draw (311.28,246.04) node [anchor=north west][inner sep=0.75pt]  [font=\large,color={rgb, 255:red, 0; green, 0; blue, 0 }  ,opacity=1 ]  {$\xi _{2}$};
        % Text Node
        \draw (504.29,209.31) node [anchor=north west][inner sep=0.75pt]  [font=\large,color={rgb, 255:red, 0; green, 0; blue, 0 }  ,opacity=1 ]  {$\xi _{3}$};
        % Text Node
        % \draw    (94.17,151.95) -- (133.17,151.95) -- (133.17,187.95) -- (94.17,187.95) -- cycle  ;
        \draw    (133.17,151.95) -- (133.17,187.95) -- (94.17,187.95)  ;
        \draw (97.17,156.35) node [anchor=north west][inner sep=0.75pt]  [font=\Large]  {$\mathbb{R}^{3}$};
        % Text Node
        \draw (154.42,326.35) node [anchor=north west][inner sep=0.75pt]  [font=\large]  {$p$};
        % Text Node
        \draw (594.98,207.91) node [anchor=north west][inner sep=0.75pt]  [font=\large]  {$q$};
        % Text Node
        \draw (290,416.4) node [anchor=north west][inner sep=0.75pt]  [font=\large,color={rgb, 255:red, 0; green, 0; blue, 0 }  ,opacity=1 ]  {$s_{1}$};
        % Text Node
        \draw (422,266.4) node [anchor=north west][inner sep=0.75pt]  [font=\large,color={rgb, 255:red, 0; green, 0; blue, 0 }  ,opacity=1 ]  {$s_{2}$};
        % Text Node
        \draw (422,133.4) node [anchor=north west][inner sep=0.75pt]  [font=\large,color={rgb, 255:red, 0; green, 0; blue, 0 }  ,opacity=1 ]  {$s_{3}$};

    \end{tikzpicture}
    \caption{Image displaying the construction. We have the path from $p$ to $q$, following the various characteristics, indicated by different dashing patterns, for the specified lengths $s_I$, $I=1,2,3$. In this case, we follow the characteristic of $\xi_1$ for distance $s_1$, then the characteristic of $\xi_2$ for $s_2$, and finally the characteristic of $\xi_3$ for $s_3$, ending up at $q$. Along each characteristic we know how to evaluate the invariant frame components $\tensor{X}{_i^I}$, namely by means of \eqref{eq:evolutionAlongCharacteristic}.}
    \label{fig:constructionDiagram}
\end{figure}

\begin{myexample}[Finding the Bianchi II invariant frame]\label{ex:bianchi2Computation}
    As we learn from Table \ref{tab:bianchis}, the canonical structure constants for a Bianchi II Lie algebra are given by $\tensor{C}{^1_2_3}=-\tensor{C}{^1_3_2}=1$. Hence, the matrices we need for step (ii) are
    \begin{equation}
        \vb{C}_1=0,\quad\vb{C}_2=\pmqty{0 & 0 & 1 \\ 0 & 0 & 0 \\ 0 & 0 & 0}\qand\vb{C}_3=\pmqty{0 & -1 & 0 \\ 0 & 0 & 0 \\ 0 & 0 & 0}.
    \end{equation}
    Suppose the coordinate origin is our base point $p$. Then the durations $s_I$ can be considered as functions on $s_I=s_I(\vb*{x})$. Consequently,
    \begin{equation}
      \begin{split}
        &\exp(-\vb{C}_1s_1)=I,\quad\exp(-\vb{C}_2s_2)=\pmqty{1 & 0 & -s_2 \\ 0 & 1 & 0 \\ 0 & 0 & 1},\\
        &\mbox{and}\quad\exp(-\vb{C}_3s_3)=\pmqty{1 & s_3 & 0 \\ 0 & 1 & 0 \\ 0 & 0 & 1}.
      \end{split}
    \end{equation}
    We follow our path first along the characteristic of $\tensor{\xi}{_1}$, then $\tensor{\xi}{_2}$, and finally $\tensor{\xi}{_3}$ (the $\{\tensor{\xi}{_I}\}$ are the basis vector fields for the Lie algebra corresponding to the structure constants). So, the total matrix computed is
    \begin{equation}
        \exp(-\vb{C}_3s_3)\exp(-\vb{C}_2s_2)\exp(-\vb{C}_1s_1)=\pmqty{1 & s_3 & -s_2 \\ 0 & 1 & 0 \\ 0 & 0 & 1}.
    \end{equation}
    The components $\tensor{X}{_i^I}$ can now be read as the columns, so that the IF vector fields become
    \begin{equation}\label{eq:bianchi2ExampleIF}
        \tensor{X}{_1}=1,\quad\tensor{X}{_2}=s_3\,\tensor{\xi}{_1}+\tensor{\xi}{_2}\qand\tensor{X}{_3}=-s_2\,\tensor{\xi}{_1}+\tensor{\xi}{_3}.
    \end{equation}

    Now we just need to find the functions $s_2$ and $s_3$ in terms of the given pre-KLA. The simplest way to do this is to use \eqref{eq:bianchi2ExampleIF} as an ansatz, and then choose $s_2$ and $s_3$ so that \eqref{eq:commutationProperty} is satisfied. Carrying this out, we derive the properties
    \begin{equation}
        s_2:\begin{cases}\tensor{\xi}{_1}s_2=0 \\ \tensor{\xi}{_2}s_2=1 \\ \tensor{\xi}{_3}s_2=0\end{cases}\qand s_3:\begin{cases}\tensor{\xi}{_1}s_3=0 \\ \tensor{\xi}{_2}s_3=0 \\ \tensor{\xi}{_3}s_3=1 \end{cases}.
    \end{equation}
    At this stage, one would need to choose an explicit coordinate representation of $\{\tensor{\xi}{_I}\}$ in order to concretely determine $s_2$ and $s_3$ as functions of those coordinates.
\end{myexample}

\begin{myrem}[Interpretation with respect to abstract Lie group theory]\label{rem:jantzen}
    %\textcolor{red}{[Was looking at Maurer-Cartan forms last week after an unrelated discussion with Diederik, re-reading the draft, it seems to be connected. Robbert please double check and find a better reference than Wikipedia to cite here if I am correct, maybe it is already in Jantzen or Ellis but I do not have them at hand right now.]}
    
    The matrix-exponential relation in \eqref{eq:evolutionAlongCharacteristic} has a natural interpretation in terms of Lie group theory, and indeed, this is the point of view that drives Jantzen's argument~\cite{jantzen_dynamical_1979}. The collection $\{\tensor{\xi}{_I}\}$, disregarding its nature as vector fields, can be viewed as a basis for the Lie algebra associated to the Lie group of isometries for $\Sigma^3$, $\mathfrak{g}:=\operatorname{Lie}(\operatorname{Isom}\Sigma^3)$. Then the matrices $\vb{C}_I$ that appear in \eqref{eq:equationAlongFlows}-\eqref{eq:evolutionAlongCharacteristic} are the matrices of the adjoint representation of $\mathfrak{g}$, with basis $\{\tensor{\xi}{_I}\}$: $(\vb{C}_I)^J_K=\tensor{C}{^J_I_K}$. In this sense, the action described by \eqref{eq:evolutionAlongCharacteristic} denotes the action of the adjoint representation, acting upon the Lie algebra element $X_i$.
    % Each $\xi_I$ can be viewed as a basis element of the Lie algebra $\mathfrak{g}$ associated with the symmetry group of the spatial slice $\Sigma^3$. The matrices $\vb{C}_I$ that appear in \eqref{eq:equationAlongFlows}-\eqref{eq:evolutionAlongCharacteristic} are then precisely the matrices of the adjoint representation of $\mathfrak{g}$, i.e.\ 
    % \[
    %     (C_I)^j{}_k = -C_{Ik}{}^{j},
    % \]
    % where $C_{IJ}{}^K$ are the structure constants. In this sense,
    % \[
    %     X_i(s) = e^{-s C_I} X_{i,0}
    % \]
    % denotes the action of the adjoint representation $\mathrm{Ad}_{\exp(-s \xi_I)}$ of the one-parameter subgroup $\exp(s\xi_I)$ on the initial vector $X_{i,0}$. 
    Therefore, constructing an invariant frame by solving the system of PDEs \eqref{eq:commutationProperty} corresponds to choosing a left-invariant frame on the Lie group $G$ associated with $\mathfrak{g}$.

    The corresponding dual coframe $\{\tensor{e}{^i}\}$ obtained in Section~\ref{sec:classification} then can be seen as the Maurer-Cartan coframe, satisfying the Maurer-Cartan equation
    \begin{equation}
        \dd{\tensor{e}{^i}}=-\tfrac{1}{2}\tensor{C}{^i_a_b}\tensor{e}{^a}\wedge\tensor{e}{^b}.
    \end{equation}
    % \[
    %     \mathrm{d} e^i = -\tfrac{1}{2} C^{i}{}_{jk}\, e^j \wedge e^k.
    % \]
    In other words, the algorithm presented here can be viewed as an explicit coordinate-level method for recovering the Maurer-Cartan structure in a general chart, without requiring the reader to invoke abstract group-theoretic machinery. This also provides another explanation of the path-independence of the construction, to be explored in the next subsection: the group structure guarantees that composing flows of the $\tensor{\xi}{_I}$ reproduces the group multiplication law, while the adjoint representation ensures consistency across different orderings of flows. In this sense, our algorithm provides a concrete way to compute the Maurer–Cartan coframe in any coordinate chart.

    %\textcolor{red}{[This seems related to the \href{https://en.wikipedia.org/wiki/Maurer\%E2\%80\%93Cartan_form\#Maurer\%E2\%80\%93Cartan_frame}{Maurer-Cartan coframe}. In other words, our algorithm provides a concrete way to compute the Maurer–Cartan coframe in any coordinate chart to highlight the usefulness]}
\end{myrem}

\subsection{Auxiliary results \& classification}\label{sec:auxillaries}
Now that we have a method by which we can find the invariant frame, we can also investigate the procedure further in order to derive some additional results.
\begin{description}
    \item[Uniqueness] Due to the nature of the IF being determined in Section \S\ref{sec:findingFrame} by means of an IVP, we already have uniqueness of the solution along a given (characteristic) curve. However, this does not guarantee that evolution along two different curves to the same end point will result in the same IF at that end point: there could be path-dependency, leading to two or more IFs belonging to the same pre-KLA.

    We show that there is no path-dependency, so that \emph{the IF determined in \S\ref{sec:findingFrame} is unique.} To do so, let us by contradiction assume that belonging to a pre-KLA $\{\tensor{\xi}{_I}\}$ there are two IFs, $\{\tensor{W}{_i}\}$ and $\{\tensor{Y}{_i}\}$, which both satisfy $W_i|_p=\tensor{\xi}{_i}|_p=\tensor{Y}{_i}|_p$ at some $p\in\Sigma^3$. Then, $\tensor{Z}{_i}=\tensor{W}{_i}-\tensor{Y}{_i}$, $i=1,2,3$, are vector fields (i) which also commute with all $\tensor{\xi}{_I}$, so equation \eqref{eq:evolutionAlongCharacteristic} holds, and (ii) so that $\tensor{Z}{_i}|_p=0$. Now let us write down \eqref{eq:evolutionAlongCharacteristic}:
    \begin{equation}
        \dv{\tensor{Z}{_i^I}}{s}=-\tensor{C}{^I_A_B}\tensor{T}{^A}\tensor{Z}{_i^B}.
    \end{equation}
    Starting at $p$, following \emph{any} path the components $\tensor{Z}{_i^I}$ will not change, as the derivative depends on the value at $p$, which is zero. Therefore, we can conclude that $\tensor{Z}{_i}=0$ as vector field, and thus that $\tensor{W}{_i}=\tensor{Y}{_i}$ for all $i$. Hence, the IF with given base point $p$, is unique.

    It is important to stress that this result crucially relies on the simply-connectedness of $\Sigma^3$. When this assumption fails, topological obstruction can appear, caused, for instance, by the presence of discrete symmetries.

    % \textcolor{red}{[I would add the following] It is important to stress that this result crucially relies on the simply-connectedness of $\Sigma^3$. When this assumption fails, topological obstruction can appear, caused, for instance, by the presence of discrete symmetries.}

    % This result critically depends on our assumption of $\Sigma^3$ being simply connected. If this were not the case, 

    \medskip
    \item[Diffeomorphism naturality] Suppose now that we know the pre-KLA and IF for some situation, and that we have a transformation from the one pre-KLA to a new one. How does the IF for the new pre-KLA relate to the old one? In order to assess this, let $\{\tensor{\xi}{_I}\}$ and $\{\tensor{X}{_i}\}$ be a pre-KLA and IF on $\mathcal{M}$, respectively, $\{\tensor{{\xi'}}{_I}\}$ a pre-KLA on $\mathcal{N}$, and $\phi$ a mapping so that $\phi:\tensor{\xi}{_I}\mapsto\tensor{{\xi'}}{_I}$.

    If we have $\phi=\Phi_*$ for some diffeomorphism $\Phi:\mathcal{M}\to\mathcal{N}$, then the IF belonging to $\{\tensor{{\xi'}}{_I}\}$ is given by $\{\phi(\tensor{X}{_i})\}$. To see this, observe firstly that
    \begin{equation}
        [\tensor{{\xi'}}{_I},\phi(\tensor{X}{_i})]=[\Phi_*\tensor{\xi}{_I},\Phi_*\tensor{X}{_i}]=\Phi_*[\tensor{\xi}{_I},\tensor{X}{_i}]=0
    \end{equation}
    so commutativity is established, and secondly that
    \begin{equation}
        \Phi_*\tensor{X}{_i}|_\Phi(p)f=\tensor{X}{_i}|_p(f\circ\Phi)=\tensor{\delta}{_i^I}\tensor{\xi}{_I}|_p(f\circ\Phi)=\tensor{\delta}{_i^I}\Phi_*\tensor{\xi}{_I}|_pf,
    \end{equation}
    so that at the image of $p$, the $\{\phi(\tensor{X}{_i})\}$ are exactly of the desired form. Finally, since we showed above that the IF is unique, we must conclude that indeed $\{\Phi_*\tensor{X}{_i}\}$ is the IF corresponding to the pre-KLA $\{\Phi_*\tensor{\xi}{_I}\}$.

    \medskip
    \item[$\operatorname{GL}(3)$ naturality] Suppose now that we perform a \emph{constant} $\operatorname{GL}(3)$ transformation $R$ on a pre-KLA $\{\tensor{\xi}{_I}\}$, so $\tensor{\xi}{_I}\mapsto\tensor{{\xi'}}{_I}=\tensor{R}{^A_I}\tensor{\xi}{_A}$. It is straightforward to check that $\{\tensor{R}{^a_i}\tensor{X}{_a}\}$ is the new IF. Additionally, the components $\tensor{X}{_i^I}$ will transform as $\tensor{{X'}}{_i^I}=\tensor{R}{^a_i}\tensor{X}{_a^A}\tensor{(R^{-1})}{^I_A}$.
\end{description}

Taking all of the above together, we conclude it is sensible to calculate an IF for a class representative of pre-KLA viewed as Lie algebra, i.e.\ with given structure constants, and then tabulate the results over the different Lie algebras sorted by Bianchi type. This is what has been done for Table \ref{tab:forms}, and that we have already discussed in Section \S\ref{sec:table}. The conducted calculations leading to said table follow analogously to Example \ref{ex:bianchi2Computation}.

\section{Summary and conclusion}
According to the standard cosmological worldview, which is based on the FLRW equations, on the largest scales the universe should be perfectly homogeneous and isotropic, adhering to the cosmological principle. However, in physical reality the universe contains a rich infrastructure of imperfections, and it is these imperfections that are essential for the formation and evolution of all structure and objects in the cosmos. These imperfections are treated in a perturbative manner, with respect to the FLRW background. Recent observations, however, further increase existing tensions within this approach, prompting explorations for alternatives.

One alternative is to change the starting point: assume that the background cosmology is not the homogeneous, isotropic FLRW. That is, that we consider cosmologies which, even at the largest scales, do not obey the cosmological principle. Perhaps such a universe (one out of many) provides a closer match to the (recent) observations.

The topic of this work was to consider homogeneous yet anisotropic universes. Seeing as homogeneity and (an)isotropy are defined in terms of the (infinitesimal) symmetries of a spacetime metric, we took this structure --- the spacetime metric's Lie algebra of Killing vector fields (KLA) --- as the foundational object. Subsequently, we turned the question around: when given a desired Lie algebra of vector fields (termed a pre-KLA), how can we find a metric on which they are Killing?

Let $\{\tensor{\xi}{_I}\}_{I=1}^3$ indicate the pre-KLA at hand, which is defined on a Riemannian manifold $\Sigma^3$, with the understanding that $\Sigma^3$ represents a spatial slice (in a spacetime decomposable topologically as $\mathcal{M}=\Sigma^3\times\mathbb{R}$). We deduced that if we can find vector fields $\{\tensor{X}{_i}\}_{i=1}^3$ (on $\Sigma^3$) such that $[\tensor{\xi}{_I},\tensor{X}{_i}]=0$ for all $i,I$, then we could construct a metric on which $\{\tensor{\xi}{_I}\}$ would be Killing: see \eqref{eq:basisInsight} in Section \S\ref{sec:findingFrame}.

By means of the method of characteristics we obtained differential equations for $\tensor{X}{_i^I}$, which are the components of $\{\tensor{X}{_i}\}$ in the $\{\tensor{\xi}{_I}\}$-basis. With the additional results of Section \S\ref{sec:auxillaries}, then, we tabulated our findings for $\tensor{X}{_i^I}$ per Bianchi type, and presented this along with the method of how to use it in Section \S\ref{sec:table}. Some example uses followed in Section \S\ref{sec:corollaries}.

The relevance of the presented formalism lies in the fact that it allows one to express a spacetime metric --- or metric-derived quantities such as the Einstein tensor --- purely based on its (desired) KVFs. Tying these directly to the KVFs in this fashion may give a more profound insight in to the dynamics of these quantities, and their relation to the underlying isometry symmetries of the metric.

This is also what we are pursuing in upcoming works of ours. We are using this more general point of view in order to investigate properties of general homogeneous, yet anisotropic universes. Our particular aim is to study cosmic microwave background (CMB) realizations in such universes, to find if there are signatures of the underlying anisotropic cosmology to be found in them. An example, discussed further in \cite{scholtens_no2}, is found in Figure \ref{fig:here}
\begin{figure}[t]
    \centering
    \includegraphics[width = \textwidth]{cmb_disc_spec.png}
    \caption{Simulation of a CMB (Sachs-Wolfe effect) in a Bianchi V universe: a superposition of wave modes permissible in this universe. The ``washing out'' towards the bottom is predicted from the form of the Bianchi V metric.}
    \label{fig:here}
\end{figure}

Another area of improvement may be to investigate the relevance of the reported formalism to locally rotationally symmetric (LRS) Bianchi models. While adding a fourth KVF into the mix a priori invalidates the utilized methodology (as it was based on a 3D Lie (pre-)KLA), the fact that such a new KVF would be rotation-generating may carry enough additional structure to incorporate it.

% In this paper we discussed, when given a pre-KLA, how to find the appropriate metric frame. In \S\ref{sec:derivation}, after some mathematical groundwork concerning symmetries \& Bianchi models, we found the condition \eqref{eq:commutationProperty} that led to an appropriate metric choice. This condition translated into a set of differential equations, which was then solved by means of the method of characteristics.

% This solution, together with some auxiliary results, led us to formulate Table \ref{tab:forms} in \S\ref{sec:table}

\section*{Acknowledgements}
The authors wish to thank Sigbjørn Hervik for a careful read, targeted feedback about the definition of a Bianchi model, and additional mathematical context. R.W.S.\ wishes to thank Dave Verweg for coining the term ``separated metric.''

% \section*{References}
\bibliographystyle{JHEP.bst}
\bibliography{lib}

\end{document}